\newtheorem{problem}{\textbf{Problem}}
\newtheorem{definition}{\textbf{Definition}}
\newtheorem{lemma}{\textbf{Lemma}}
\newtheorem{corollary}{\textbf{Corollary}}
\newtheorem{remark}{\textbf{Remark}}
\newtheorem{proposition}{\textbf{Proposition}}
\title{\LARGE \bf Detection and Inference of Randomness-based Behavior for Resilient Multi-vehicle Coordinated Operations
}
\author{Paul J Bonczek and Nicola Bezzo
\thanks{Paul J Bonczek and Nicola Bezzo are with the Charles L. Brown Department of Electrical and Computer Engineering, and Link Lab, University of Virginia, Charlottesville, VA 22904, USA. Email: {\tt \{pjb4xn, nb6be\}@virginia.edu}}
}
\newcommand*{\N}{\mathbb{N}}
\newcommand*{\R}{\mathbb{R}}
\newcommand*{\E}{\mathbb{E}}
\begin{document}

\bstctlcite{IEEEexample:BSTcontrol}

\maketitle
\thispagestyle{empty}
\pagestyle{empty}
\begin{abstract}

A resilient multi-vehicle system cooperatively performs tasks by exchanging information, detecting, and removing cyber attacks that have the intent of hijacking or diminishing performance of the entire system. In this paper, we propose a framework to: i) detect and isolate misbehaving vehicles in the network, and ii) securely encrypt information among the network to alert and attract nearby vehicles toward points of interest in the environment without explicitly broadcasting safety-critical information. To accomplish these goals, we leverage a decentralized virtual spring-damper mesh physics model for formation control on each vehicle. To discover inconsistent behavior of any vehicle in the network, we consider an approach that monitors for changes in sign behavior of an inter-vehicle residual that does not match with an expectation. Similarly, to disguise important information and trigger vehicles to switch to different behaviors, we leverage side-channel information on the state of the vehicles and characterize a hidden spring-damper signature model detectable by neighbor vehicles. Our framework is demonstrated in simulation and experiments on formations of unmanned ground vehicles (UGVs) in the presence of malicious man-in-the-middle communication attacks.

\end{abstract}
\section{Introduction} \label{sec:introduction}

The use of coordinated multi-vehicle systems to perform various tasks has been extensively explored for many years \cite{search_and_rescue,surveillance1,military_convoy,navigation}. By leveraging multiple vehicles instead of only one, it is possible to perform more operations, and complete a task faster and more efficiently. Examples of such operations that can benefit from the use of multi-vehicle systems are search and rescue operations \cite{search_and_rescue} depicted in Fig.~\ref{fig:intro}, surveillance \cite{surveillance1}, military convoying/platooning \cite{military_convoy}, and exploration missions \cite{navigation}.
Generally, approaches that leverage multi-vehicle systems assume that all vehicles are cooperative while performing the desired operations to maintain swarming formations and can exchange all necessary information to achieve the desired goal. However, these vehicles are susceptible to malicious external attacks, especially on their communication infrastructure, which can affect the entire network performance. For example, with a Man-In-The-Middle (MITM) attack \cite{MITM}, an attacker intercepts a communication broadcast and replaces it with altered data which are then received by neighboring vehicles. Successful attackers are able to purposefully block important information from being received by nearby vehicles in the formation or control the entire multi-vehicle network to an undesired location.

\begin{figure}[t]
\centering
\includegraphics[width=0.48\textwidth]{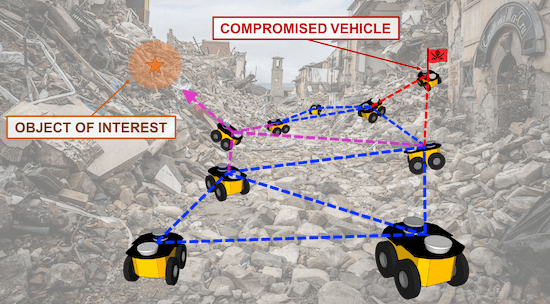}
\vspace{-10pt}
\caption{Pictorial motivation of the problem in this paper in which a multi-vehicle system cooperatively performs a task while inferring the objective of other teammates and detecting if they are compromised by cyber-attacks.}
\vspace{-16pt}
\label{fig:intro}
\end{figure}

Safety-critical information, if not properly encrypted, can also be intercepted creating further security issues. Although encryption techniques can be deployed, there exists attacks that are capable of discovering encryption keys to extract data. The most secure option is to avoid exchanging data altogether. In this work we propose to leverage side-channel information that contains hidden data, which is unknown to malicious attackers. For example, if a vehicle discovers an object of interest whose identity needs to be kept secret, as depicted in Fig.~\ref{fig:intro}, it could perform a certain signature motion (similar to watermarking \cite{watermarking}) to indicate to neighboring vehicles of the discovered object. This motion triggers the surrounding vehicles to infer its position and switch tasks to get attracted to the same object. In this way, a vehicle can collect data and infer the behavior of other vehicles without explicitly broadcasting important information. With such premises, in this work we focus on applications for cooperative autonomous vehicle networks in the presence of adversaries. We expand on literature that leverage virtual springs for decentralized formation control \cite{Energy_Conserve,CHEN20161730,BEZZO201494,6319411} while introducing a monitoring approach to detect inconsistent behaviors between expected and received data to provide: 1) resiliency to cyber-attacks on communication broadcasts and 2) discovery of a hidden signature via side-channel states.

\subsection{Related Work} \label{sec:related_work}

Analyzing the literature on the area of multi-vehicle resilience, we find works employing the Mean Subsequence Reduced (MSR) algorithm that provides resiliency to $F$ number of uncooperative agents while still reaching a consensus of a specific value
\cite{7822915,10.1093/imamci/dnx062}. While this is the standard resilient consensus algorithm to agree on a specific value (e.g., heading angle) for teams of mobile robots, it does not provide resilience to misbehaving vehicles within time-varying proximity-based formations.

Similar to previous literature, our proposed work leverages a residual-based detection technique for attack detection \cite{BadData,SPRT,GLR,CUSUM1,Paul_IFAC,Paul_ACC}. Authors in \cite{MVS_residual} leverage the residual-based \textit{Cumulative Sum} (CUSUM) detection procedure to discover spoofs to on-board navigation systems of robots in multi-vehicle systems. In our previous work, we have characterized the \textit{Cumulative Sign} (CUSIGN) detector \cite{Paul_IFAC} on a single vehicle, which is designed to detect non-random (i.e., inconsistent) \textit{signed} residual behavior.
We demonstrated the effectiveness of the randomness-based CUSIGN detector when compared to the magnitude-based CUSUM in the presence of \textit{stealthy} sensor attacks that intentionally hide within noise profiles to remain undetected. Moreover, the fundamental component we want to convey is that noisy systems will follow an expected model behavior under nominal conditions, whereas systems that experience hijacking attempts from an attacker will exhibit contradictory behavior.

In this work, we extend our recent randomness-based detection techniques for sensor spoofing introduced in \cite{Paul_IFAC,Paul_ACC} to detect attacks and hidden signatures in multi-vehicle systems. Specifically, each vehicle monitors the \textit{inter-vehicle residual} ---defined as the difference between received information and predicted values--- using the CUSIGN detector to determine whether nearby vehicles are behaving as expected or not. Additionally, we propose a detection scheme to monitor a residual sign switching rate (i.e., the frequency of residual sign changes) to identify if nearby vehicles are displaying hidden signature behavior, by leveraging known stochastic properties of the system models.

To summarize, the objective of this work is: 1) to detect stealthy MITM attacks on communication broadcasts that leave behind inconsistent inter-vehicle residual behavior, and 2) to provide a method for the network to resiliently maintain operations, while 3) using hidden side-channels to communicate the discovery of an object to nearby vehicles without explicitly sending this information. The contribution of this work is twofold: 1) a detector for discovering inconsistent behavior from stealthy MITM communication attacks in multi-vehicle systems, and 2) a `side-channel'-based scheme to produce and detect a hidden signature to protect critical information from being intercepted in communication broadcasts by attackers in multi-vehicle operations. 
\section{Preliminaries} \label{sec:preliminaries}

Let us consider a multi-vehicle network of $N$ homogeneous robots modeled as a directed graph $\mathcal{G} = (\mathcal{V}, \mathcal{E})$, where we denote $\mathcal{V} = \{1,\dots,N \}$ as the vehicle set and the edge set $\mathcal{E} \subset \mathcal{V} \times \mathcal{V}$, such that an edge $(i,j) \in \mathcal E$ indicates a connection from vehicle $i \in \mathcal V$ to vehicle $j \in \mathcal V$. 
All vehicles are considered to have second order dynamics that can be represented in a linear time-invariant (LTI) state space form:
\begin{equation} \label{eq:vehicle_dynamical_model}
    \dot{\bm{x}}_{i} = \bm{A} \bm{x}_i + \bm{B} \bm{u}_{i} + \bm{\nu}_i, \;\;\; \forall i \in \mathcal{V},
\end{equation}
where $\bm{A}$ and $\bm{B}$ denote state and input matrices, the state vector $\bm{x}_i \in \R^n$ consisting of positions $\bm{p}_i$ and velocities $\bm{v}_i = \dot{\bm{p}}_i$, and $\bm{\nu}_i \in \R^{n}$ representing zero-mean Gaussian process noise. Each vehicle $i \in \mathcal{V}$ within the vehicle network is controlled by a virtual spring-damper physics model as,
\begin{align} \label{eq:spring_force}
\bm{u}_i = \ddot{\bm{p}}_i = & \bigg[ \sum_{j \in \mathcal{S}_i} \kappa_{v}( l_{ij} - l^0_v ) \vec{\bm{d}}_{ij} - \sum_{o \in \mathcal{O}_i} \kappa_{o}( l_{io} - l^0_o ) \vec{\bm{d}}_{io}, \nonumber \\[-1pt]
&+ \;\kappa_{g} l_{ig} \vec{\bm{d}}_{ig} \bigg] - \gamma_v \dot{\bm{p}}_i \in \R^{m},
\end{align}
where $\mathcal{S}_i \subset \mathcal{V}$ is the neighbor set of a vehicle $i$, $ \mathcal{O}_i$ denotes the set of nearby obstacles, while $l^0_v$ and $l^0_o$ are desired rest lengths between the vehicle $i$ and its neighbors and obstacles. The variables $l_{ij}$, $l_{io}$, $l_{ig}$ represent euclidean distances (i.e., virtual spring lengths) and $\kappa_v$, $\kappa_o$, $\kappa_g$ are spring constants between neighboring vehicles, obstacles, and the goal, respectively, while $\vec{\bm{d}}$ denotes the unit vector indicating direction of the forces. Given damping coefficients that satisfy $\gamma_v > 0 $, the multi-vehicle system emulates a true spring-mass mesh where dissipating forces act against the velocities, leading to an equilibrium state of zero velocity in the absence of external forces. All vehicles are fitted with a range sensor providing $360$ degree field of view with a limited range $\delta_r > 0$ for obstacle avoidance. Any vehicle $i$ that comes within sensing range of an obstacle $o \in \mathcal{O}_i$ (with position $\bm{p}_o $) attaches a spring to it.

\subsection{Connected Proximity-based Graph} \label{sec:Graph_model}

In order for the vehicle network to cooperatively maintain the desired proximity-based formation in \eqref{eq:spring_force}, the vehicles broadcast information that is received by any other vehicle within a maximum communication range $\delta_c > 0$.

\begin{definition}[Communication Graph] \label{def:comm_graph}
    Given the $N$ vehicles in set $\mathcal V$ with a maximum communication range $\delta_c$, we define the graph $\mathcal{G}_{\mathcal{C}} = ( \mathcal{V}, \mathcal{E}_{\mathcal{C}} )$ with the following edge set,
    \begin{equation} \label{eq:comm_edges}
        \mathcal{E}_{\mathcal{C}} = \big\{ (i,j) \; \big| \; \big\| \bm{p}_i - \bm{p}_j \big\| \leq \delta_c, \; i,j \in \mathcal{V} \big\},
    \end{equation}
    as the \textit{communication graph} of the vehicle set $\mathcal{V}$.
\end{definition}

Consequently, the set of all vehicles within communication range of a vehicle $i$, denoted as $\mathcal{C}_i \subseteq \mathcal{V}$, follows,
\begin{equation} \label{eq:comm_set}
    \mathcal{C}_i = \big\{ j \in \mathcal{V} \; \big| \; (i,j) \in \mathcal{E}_{\mathcal{C}} \big\}.
\end{equation}

All $N$ vehicles are assumed to be equipped with localization/pose sensors represented in the output vector $\bm{y}^{(k)}_i$ by,
\begin{equation} \label{eq:output_vector}
    \bm{y}^{(k)}_i = \bm{C} \bm{x}^{(k)}_i + \bm{\eta}^{(k)}_i \in \R^{N_s}, \;\;\; \forall i \in \mathcal{V},
\end{equation}
where $\bm{C}$ is the output matrix and $\bm{\eta}^{(k)}_i \in \R^{N_s}$ denotes zero-mean Gaussian measurement noise at every discrete time iteration $k \in \N$. A standard Kalman Filter with gain $\bm{K} \in \R^{n \times N_s}$ provides a state estimate $\hat{\bm{x}}_i^{(k)} \in \R^{n}$. To enable proximity-based formation control, each vehicle $i\in \mathcal V$ broadcasts its position estimate $\hat{\bm{p}}_i^{(k)}$ (within the state estimate vector) that is received by any nearby vehicles $j \in \mathcal{C}_i$.
The neighbor set $\mathcal S_i$ in \eqref{eq:spring_force} is used to control the motion of each vehicle $i$ and is computed following Gabriel Graph rule \cite{GabrielGraph1},
\begin{equation} \label{eq:neighbor_set}
    \mathcal{S}_i = \big\{ j \in \mathcal{V} \setminus \mathcal{R}_i \; \big| \; \widehat{ihj} \leq \pi/2, \; j,h \in \mathcal{C}_i \big\},
\end{equation}
where $\widehat{ihj}$, $i\neq j \neq h$ is the interior angle within a three vehicle configuration obtained from the on-board position estimate $\hat{\bm{p}}_i^{(k)}$ and received position estimates $\hat{\bm{p}}_j^{(k)}$ and $\hat{\bm{p}}_h^{(k)}$ from vehicles $j,h \in \mathcal{C}_i$. The set $\mathcal{R}_i \subset \mathcal{V}$ is a subset of vehicles that are deemed compromised by vehicle $i$ and not included in the control graph. 

\begin{definition}[Control Graph] \label{def:proximity_graph}
    Given the vehicle set $\mathcal V$ with each vehicle $i \in \mathcal{V}$ having a neighbor set for control $\mathcal{S}_i \subseteq \mathcal{C}_i$ computed from the Gabriel Graph rule in \eqref{eq:neighbor_set}, we define the graph $\mathcal{G}_{\mathcal{U}} = ( \mathcal{V}, \mathcal{E}_{\mathcal{U}} )$ with the edge set,
    \begin{equation} \label{eq:control_edges}
        \mathcal{E}_{\mathcal{U}} = \big\{ (i,j) \; \big| \; j \in \mathcal{S}_i, \forall i \in \mathcal{V} \big\},
    \end{equation}
    as the \textit{control graph} of the vehicle set $\mathcal{V}$.
\end{definition}

Construction of the control graph by leveraging the Gabriel Graph rule \cite{GabrielGraph1} allows for a connected graph without crossing edges and a uniform coverage (while maintaining desired distances between vehicles) of the network \cite{Energy_Conserve,CHEN20161730,BEZZO201494,6319411}.

\subsection{Attack Model} \label{sec:Attack_model}

We assume the multi-vehicle network is navigating within an adversarial environment, such that individual vehicles may be subject to malicious communication attacks (e.g., MITM attacks \cite{MITM}). In the case of an attack on an unprotected proximity-based formation, a single compromised vehicle can affect the entire network of $N$ vehicles as the effects of the attack are propagated throughout the network. During a persistent communication attack, we assume that an attacker can continuously intercept and modify broadcast data with stealthy (i.e., hidden within the system noise profile)
information in an attempt to intentionally fool (i.e., hijack) the vehicle network. Each vehicle $i$ exchanges state estimates, nearby obstacle positions, and neighbor set information at every time instance $k$ such that nearby vehicles have knowledge of its intended motion by construction of the network model in \eqref{eq:spring_force}. We indicate the spoofed broadcast information from a vehicle $i \in \mathcal{V}$ that is received by other vehicles as:
\begin{equation} \label{eq:spoofed_information}
\begin{split}
    \hat{\bm{x}}_i^{(k)} + \bm{\xi}_i^x & \longrightarrow \Tilde{\hat{\bm{x}}}_i^{(k)}, \\
    \bm{p}_{o} + \bm{\xi}_i^o & \longrightarrow \Tilde{\bm{p}}_{o}, \; \forall o \in \mathcal{O}_i, \\
    \{ \mathcal{S}_i \setminus \mathcal{S}_{i}^{\xi^-} \} \cup \mathcal{S}_{i}^{\xi^+} & \longrightarrow \Tilde{\mathcal{S}}_{i},
\end{split}
\end{equation}
where $\bm{\xi}_i^x \in \R^n$ and $\bm{\xi}_i^o \in \R^{2}$ denote the attack vectors on state and obstacle positions, whereas the sets $\mathcal{S}_{i}^{\xi^-} \hspace{-3pt} \subset \mathcal{V}$ and $\mathcal{S}_{i}^{\xi^+} \hspace{-3pt} \subset \mathcal{V}$, $ \hspace{2pt} \big\{ \mathcal{S}_{i}^{\xi^-} \hspace{-2pt} \cap \hspace{1.2pt} \mathcal{S}_{i}^{\xi^+} \hspace{-1pt} \big\} \hspace{-1pt} = \emptyset$ are vehicle identifications that are removed from and added to the original neighbor set $\mathcal{S}_{i}$, respectively. For any attack vector $\bm{\xi}_i^x \ne 0$, $\bm{\xi}_i^o \ne 0, \; \forall o$, or sets satisfying $| \mathcal{S}_{i}^{\xi^+}| , | \mathcal{S}_{i}^{\xi^-} | > 0$, an attacker is replacing the original message such that the received information by any nearby neighbors will differ from the intended broadcast.

\subsection{Problem Formulation} \label{sec:problem}

Given the network described by the virtual spring model \eqref{eq:spring_force} and the \textit{control graph} $\mathcal{G}_{\mathcal{U}}(\mathcal{V},\mathcal{E}_{\mathcal{U}})$, we are interested in solving the following problems:
\begin{problem}[Vehicle Inconsistency Detection] \label{problem1}
Create a decentralized detection policy $\mathcal{P}_d$ such that a vehicle $j \in \mathcal{V}$ that is experiencing inconsistent behavior can be discovered and isolated by any vehicle $i \in \mathcal{V}$ such that,
\vspace{-2pt}
\begin{equation} \label{prob:isolate_remove_edge}
    (i,j) \notin \mathcal{E}_{\mathcal{U}}, \; i \ne j,
\end{equation}
to prevent undesirable effects to the multi-vehicle network.
\end{problem}

A second problem that we explore in this work is to enable indirect exchange of information by leveraging signature mobility behaviors of the agents of the swarm.  While navigating through an adversarial environment, vehicles that come into sensing range of an object of interest desire to notify the remaining vehicles in the network of their discovery without revealing explicitly the identification and position of the object to maintain secrecy from adversaries.

\begin{problem}[Hidden Signature Detection]
\label{problem2}
Given a vehicle $i \in \mathcal{V}$ that has found an object of interest while navigating within an environment, find a control policy $\mathcal{P}_u$ to covertly provide an identifiable hidden signature $\bm{u}_i^{\mathcal{H}} \in \R^m$ for any nearby vehicles $j \in \mathcal{C}_i \subset \mathcal V$ to detect without explicitly sending information of the discovered object through communication broadcasts.
\end{problem}

Upon recognizing a signature behavior, neighbors of the vehicle will estimate the position of the object based on the same signature and switch toward that object. 

\section{Framework} \label{sec:framework}

In this section we describe the decentralized monitoring framework for detection and isolation of inconsistently behaving vehicles in the network, while allowing each vehicle to provide a hidden signature for nearby vehicles. The diagram in Fig.~\ref{fig:architecture} summarizes our proposed scheme in which each vehicle follows the primary or hidden control model, as well as detects whether neighboring robots have expected behavior according to the primary or hidden models.
\vspace{-3pt}
\begin{figure}[th!b]
\hspace{-4pt}
\centering
\includegraphics[width=0.48\textwidth]{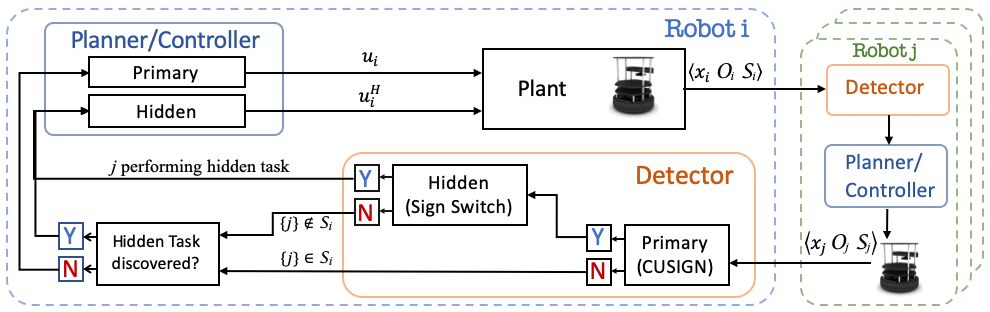}
\vspace{-7pt}
\caption{Overall framework architecture followed by each vehicle $i \in \mathcal{V}$.}
\vspace{-6pt}
\label{fig:architecture}
\end{figure}

\subsection{Monitoring Vehicles for Consistency} \label{sec:Neighbor_monitoring}

During operations, each vehicle monitors nearby vehicles for consistent behavior according to the network model described in \eqref{eq:spring_force}. Each vehicle $i$ receives broadcast information from any nearby vehicle $j \in \mathcal{C}_i$ as represented in \eqref{eq:comm_set}. This vehicle $i$ is able to make state evolution predictions of a nearby vehicle $j \in \mathcal{S}_i$ such that the neighbor set of vehicle $j$ satisfies $\mathcal{S}_j \subset \mathcal{C}_i$. The inclusion of the neighbor set $\mathcal{S}_j \subset \mathcal{C}_i$ is needed in order for vehicle $i$ to predict the future state of the system using \eqref{eq:spring_force}. The state prediction of a vehicle $j$ computed by a vehicle $i$ is computed as,
\begin{equation} \label{eq:neighbor_predict}
    \bar{\bm{x}}_{ij}^{(k+1)} = \bm{A} \hat{\bm{x}}_{j}^{(k)} + \bm{B} \bm{u}_{ij}^{(k)} \in \R^n,
\end{equation}
where $\bm{u}_{ij}^{(k)} \in \R^m$ is the estimated input for vehicle $j$ that is computed by vehicle $i$ which follows the primary network model \eqref{eq:spring_force}. At every $k$th time iteration, a vehicle $i$ compares the \textit{inter-vehicle residual} $\bm{r}_{ij}^{(k)}$ ---defined as the difference between the received state information $\hat{\bm{x}}_j^{(k)}$ and the computed state prediction of a vehicle $j \in \mathcal{S}_i$--- by,
\begin{equation} \label{eq:prediction_residual}
    \bm{r}_{ij}^{(k)} = \hat{\bm{x}}_j^{(k)} - \bar{\bm{x}}_{ij}^{(k)} \in \R^n.
\end{equation}

If a vehicle $j$ is attack-free and is following the primary network model while monitored by a vehicle $i$, each element $q \in \{ 1, \dots, n \}$ of the inter-vehicle residual vector is normally distributed $r_{ij,q}^{(k)} \sim \mathcal{N} \big(0,\sigma_{r,q}^2 \big)$ described as follows,
\begin{equation} \label{eq:update_residual_distribution}
    \E[r_{ij,q}] = 0, \;\;\;\; \mathrm{Var}[r_{ij,q}] = \sum_{s=1}^{N_s} \Big( K_{(q,s)} \sigma_{z,s} \Big)^2 ,
\end{equation}
where $ K_{(q,s)}$ represents the element of the $q$th row and $s$th column of the steady state Kalman gain $\bm{K}$ discussed in Section \ref{sec:Graph_model}. The variable $\sigma_{z,s}^2$ is the $s$th diagonal element of the measurement residual covariance matrix $\bm{\Sigma}_z \in \R^{N_s \times N_s}$ from the on-board state estimation process with $N_s$ sensors (see works \cite{CUSUM1,Paul_ACC,Paul_IFAC} for details of the measurement residual characteristics). Since the network consists of $N$ homogeneous vehicles, then all vehicles share the same values $\bm{K}$ and $\bm{\Sigma}_z$. Each $q$th element of $\bm{r}_{ij}^{(k)}$ is a zero-mean normally distributed variable that is characterized as:
\begin{equation}
\label{eq:binomial_probs}
    \begin{split}
    \Pr\big(r_{ij,q}^{(k)} < \E[r_{ij,q}^{(k)}] \big) &= p_- = 0.5, \\
    \Pr\big(r_{ij,q}^{(k)} > \E[r_{ij,q}^{(k)}] \big) &= p_+ = 0.5,
    \end{split}
\end{equation}
during nominal (i.e., no attack) conditions.

To monitor whether the incoming information from nearby vehicles is behaving in an expected random manner with respect to the primary network model \eqref{eq:spring_force}, we employ the Cumulative Sign (CUSIGN) detector \cite{Paul_IFAC} to check for randomness with the following procedure:

\vspace{6pt}
\centerline{\textbf{CUSIGN Detector}}
\vspace{2pt}
\hrule
\vspace{-3pt}
  \begin{equation}
  \label{pro:CUSIGN}
  \small
      \begin{array}{ll}
        \hspace{-11pt} \textbf{Initialize: } S^{(0),+}_{ij,q} \hspace{-2pt} = S^{(0),-}_{ij,q} \hspace{-2pt} = 0, \; \forall i,j,q  & \\[2pt]
        \hspace{-11pt} S^{(k),+}_{ij,q} \hspace{-2pt} = \hspace{-1pt} \max \hspace{-2pt} \big( 0,S^{(k-1),+}_{ij,q} \hspace{-2pt} + \hspace{-1pt} \text{sgn}(r_{ij,q}^{(k)}) \big) \hspace{-1pt} , &  \\[2pt]
        \hspace{-11pt} S^{(k),+}_{ij,q} \hspace{-2pt} = \hspace{-1pt} 0 \text{ and Alarm } \zeta^{(k),+}_{ij,q} = 1, & \hspace{-4pt} \textbf{if } S^{(k-1),+}_{ij,q} \hspace{-2pt} = \hspace{-1pt} \tau, \\[8pt]
        \hspace{-11pt} S^{(k),-}_{ij,q} \hspace{-2pt} = \hspace{-1pt} \min \hspace{-2pt}  \big( 0,S^{(k-1),-}_{ij,q} \hspace{-2pt} + \hspace{-1pt} \text{sgn}(r_{ij,q}^{(k)}) \big) \hspace{-1pt} , &  \\[2pt]
        \hspace{-11pt} S^{(k),-}_{ij,q} \hspace{-2pt} = \hspace{-1pt} 0 \text{ and Alarm } \zeta^{(k),-}_{ij,q} = 1, & \hspace{-4pt} \textbf{if } S^{(k-1),-}_{ij,q} \hspace{-2pt} = \hspace{-1.5pt} -\tau.
      \end{array}
      \normalsize
  \end{equation}
  \vspace{-2.5pt}
\hrule

\vspace{8pt}

The multi-vehicle detection procedure on a vehicle $i$ accumulates the signed values of the inter-vehicle residual in the CUSIGN test variables for a vehicle $j$ and triggers an alarm $\zeta^{(k),\pm}_{ij,q} = 1$ when a user-defined threshold $\tau \in \N$ is reached, otherwise $\zeta^{(k),\pm}_{ij,q} = 0$. As either of the test variables reach their respective thresholds, the test variable is then reset back to zero. 
The alarms for each $q$th element are then sent to a Memoryless Runtime Estimator (MRE) \cite{Paul_IFAC} to provide a run-time update for alarm rates $\hat{A}_{ij,q}^{(k),-}$ and $\hat{A}_{ij,q}^{(k),+}$, for simplicity denoted as $\hat{A}_{ij,q}^{(k),\pm}$, at a time $k$ by the following,
\begin{equation} \label{eq:MRE_algorithm}
    \hat{A}_{ij,q}^{(k),\pm} = \hat{A}_{ij,q}^{(k-1),\pm} + \frac{\big[\zeta_{ij,q}^{(k),\pm} - \hat{A}_{ij,q}^{(k-1),\pm} \big]}{\ell},
\end{equation}
where $\zeta_{ij,q}^{(k)}$ is the alarm, $\ell \geq 10$ is a ``pseudo-window" length, and $\hat{A}_{ij,q}^{(0)}= \E[A^{\pm}]$ is the expected alarm rate. The following lemma provides an expected alarm rate for a vehicle that is free from attacks (i.e., behaving nominally).

\begin{lemma} \label{lem:CUSIGN_exp_AR} 
Given a vehicle $i \in \mathcal{V}$ with a CUSIGN detector \eqref{pro:CUSIGN} with a threshold $\tau \in \N $ that is monitoring a vehicle $j \in \mathcal{V}$ during attack-free conditions, then the inverse of the first element of the following vector,
\begin{equation}
\label{eq:mu}
    \bm{\mu}^{+} = (\bm{I}_{\tau} - \mathcal{Q^{+}})^{-1}\bm{1}_{\tau \times 1} = (\mu_1^{+}, \dots, \mu_{\tau}^{+})^{\mathsf{T}},
    \vspace{-1pt}
\end{equation}
is the expected alarm rate $\E[A^{+}]$, and $\mathcal{Q^{+}} \in \R^{\tau \times \tau} $ represents the transient states of a designed Markov transition matrix.
\end{lemma}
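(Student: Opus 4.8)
The plan is to recognize the positive CUSIGN test statistic together with its reset rule as a finite absorbing Markov chain, and then obtain the expected alarm rate as the reciprocal of the mean absorption time, which is precisely what the fundamental-matrix expression $(\bm{I}_\tau-\mathcal{Q}^{+})^{-1}\bm{1}_{\tau\times 1}$ evaluates.

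First I would fix the stochastic primitive. Under attack-free conditions each $r_{ij,q}^{(k)}$ is a zero-mean, continuously distributed, temporally white sequence (Gaussian by \eqref{eq:update_residual_distribution}, and white by the standard properties of the Kalman residuals cited in the paper), so $\text{sgn}(r_{ij,q}^{(k)})$ is i.i.d.\ taking $+1$ and $-1$ each with probability $p_{+}=p_{-}=0.5$, as in \eqref{eq:binomial_probs}. Hence $S_{ij,q}^{(k),+}$ is a time-homogeneous Markov chain on $\{0,1,\dots,\tau\}$: from $0$ it stays at $0$ or moves to $1$ with equal probability (the $\max(0,\cdot)$ reflection), from any $s\in\{1,\dots,\tau-1\}$ it takes a symmetric $\pm 1$ step, and from $\tau$ it raises the alarm and resets to $0$.

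Next I would cast the alarm as absorption: declare $\{0,1,\dots,\tau-1\}$ transient and ``alarm-and-reset at $\tau$'' the absorbing state, and collect the $\tau\times\tau$ block of transition probabilities among the transient states into $\mathcal{Q}^{+}$ --- the ``designed'' transition matrix of the statement. Since every transient state reaches absorption with positive probability, $\mathcal{Q}^{+}$ is substochastic with spectral radius strictly below $1$, so $\bm{I}_\tau-\mathcal{Q}^{+}$ is invertible and the classical absorbing-chain identity gives the vector of expected steps to absorption from each transient state as $\bm{\mu}^{+}=(\bm{I}_\tau-\mathcal{Q}^{+})^{-1}\bm{1}_{\tau\times 1}$. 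Because the statistic is initialized --- and re-initialized after every alarm --- at $S^{(0),+}=0$, its first entry $\mu_{1}^{+}$ is exactly the expected length of a cycle between consecutive alarms.

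Finally I would close with a renewal argument: the resets make the successive inter-alarm intervals i.i.d.\ copies of this cycle, so $\{\zeta_{ij,q}^{(k),+}\}$ is a renewal process with finite mean cycle length $\mu_{1}^{+}$; by the elementary renewal theorem --- equivalently, positive recurrence of the finite reset chain on $\{0,\dots,\tau\}$, whose stationary probability of the alarm state is the reciprocal of its mean return time --- the long-run alarm rate tracked in steady state by the MRE recursion \eqref{eq:MRE_algorithm} is $\E[A^{+}]=1/\mu_{1}^{+}$ (and the mirror construction with $\text{sgn}\to-\text{sgn}$ on $\{-\tau,\dots,0\}$ yields $\E[A^{-}]$, equal to $\E[A^{+}]$ since $p_{+}=p_{-}$). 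The main obstacle I anticipate is the bookkeeping of the boundary dynamics --- encoding the reflection at $0$ and the alarm-and-reset step at $\tau$ into $\mathcal{Q}^{+}$ so that ``expected absorption time from state $0$'' coincides exactly with ``expected time between alarms,'' together with checking invertibility of $\bm{I}_\tau-\mathcal{Q}^{+}$; the underlying Markov-chain and renewal machinery is otherwise routine.
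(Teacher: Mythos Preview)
Your proposal is correct and is precisely the absorbing-Markov-chain/fundamental-matrix argument that underlies this result; the paper itself does not spell out the proof but simply defers to \cite{Paul_IFAC} for the same construction, so your write-up is in fact more detailed than what appears here.
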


\begin{proof}
    See \cite{Paul_IFAC} for a similar proof.
\end{proof}

\begin{remark}
    The previous lemma describes the expected rate $\E[A^{+}]$ at which the CUSIGN test variable $S^{(k),+}_{ij,q}$ reaches the defined threshold value $\tau$ to trigger an alarm $\zeta_{ij,q}^{(k),+} = 1$. Similarly, the design of a transition matrix with fundamental matrix $\mathcal{Q}^-$ and expected alarm rate $\E[A^-] = (\mu^-_1)^{-1}$ for the negative case is computed with transition probability ($p_+$ and $p_-$) signs inverted. For construction of $\mathcal{Q}^+$ and $\mathcal{Q}^-$, see \cite{Paul_IFAC}.
\end{remark}
\begin{proposition} \label{pro:CUSIGN_distribution}
    Assuming a vehicle $j \in \mathcal{V}$ is not experiencing MITM attacks while being monitored by a vehicle $i \in \mathcal{V}$ and using the MRE algorithm \eqref{eq:MRE_algorithm} for alarm rate estimation, the alarm rate is normally distributed by $\hat{A}_{ij,q}^{\pm}~\sim~\mathcal{N}\Big( \E[A^{\pm}],\frac{\theta \E[A^{\pm}](1-\E[A^{\pm}])}{2\ell-1}\Big)$, where $\theta \in \R_{+}$ is a scaling constant (see \cite{Paul_IFAC}).
\end{proposition}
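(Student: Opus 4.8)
The plan is to recognize the MRE update \eqref{eq:MRE_algorithm} as an exponentially weighted moving average with smoothing parameter $\alpha = 1/\ell$, rewrite it as $\hat{A}_{ij,q}^{(k),\pm} = (1-\alpha)\hat{A}_{ij,q}^{(k-1),\pm} + \alpha\,\zeta_{ij,q}^{(k),\pm}$, and unroll the recursion from the initialization $\hat{A}_{ij,q}^{(0),\pm} = \E[A^{\pm}]$ to obtain
\[
    \hat{A}_{ij,q}^{(k),\pm} = (1-\alpha)^k\,\E[A^{\pm}] \;+\; \alpha\sum_{m=1}^{k}(1-\alpha)^{k-m}\,\zeta_{ij,q}^{(m),\pm}.
\]
From this representation $\hat{A}_{ij,q}^{\pm}$ is a deterministic affine function of the alarm-indicator stream $\{\zeta_{ij,q}^{(m),\pm}\}$, so its law is inherited from that stream, and the remaining work is to characterize the first two moments of the $\zeta$'s and to invoke a central limit argument.

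First I would characterize the indicators under nominal conditions. By construction the CUSIGN test variable in \eqref{pro:CUSIGN} is a reflected $\pm 1$ walk that resets upon hitting $\pm\tau$, so the alarm epochs form a (delayed) renewal process whose i.i.d.\ inter-arrival times have mean $\mu_1^{\pm} = 1/\E[A^{\pm}]$ by Lemma~\ref{lem:CUSIGN_exp_AR}. In stationarity this makes each indicator marginally $\mathrm{Bernoulli}(\E[A^{\pm}])$, hence $\E[\zeta_{ij,q}^{(m),\pm}] = \E[A^{\pm}]$ and $\mathrm{Var}[\zeta_{ij,q}^{(m),\pm}] = \E[A^{\pm}](1-\E[A^{\pm}])$. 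Taking expectations in the unrolled recursion and using $\sum_{m=1}^{k}(1-\alpha)^{k-m} = (1-(1-\alpha)^k)/\alpha$ gives $\E[\hat{A}_{ij,q}^{(k),\pm}] = \E[A^{\pm}]$ exactly, for every $k$, so the estimator is unbiased.

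Next I would compute the steady-state variance, i.e.\ for $k$ large enough that the transient term $(1-\alpha)^k\E[A^{\pm}]$ is negligible (the relevant regime for run-time monitoring). Folding the serial correlation of the weakly dependent indicators into the constant $\theta$,
\[
    \mathrm{Var}\big[\hat{A}_{ij,q}^{(k),\pm}\big] \;\approx\; \theta\,\alpha^2\,\E[A^{\pm}](1-\E[A^{\pm}])\sum_{m=1}^{k}(1-\alpha)^{2(k-m)} \;\xrightarrow[k\to\infty]{}\; \frac{\theta\,\alpha^2\,\E[A^{\pm}](1-\E[A^{\pm}])}{1-(1-\alpha)^2},
\]
and substituting $\alpha = 1/\ell$ together with $1-(1-1/\ell)^2 = (2\ell-1)/\ell^2$ collapses the right-hand side to $\theta\,\E[A^{\pm}](1-\E[A^{\pm}])/(2\ell-1)$, which is the claimed variance. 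Finally, for normality I would note that $\hat{A}_{ij,q}^{(k),\pm}$ is a weighted sum of the bounded indicators with geometric weights $\alpha(1-\alpha)^{k-m}$ whose largest weight $\alpha = 1/\ell$ is $O(1/\ell)$ relative to the $\ell_2$-norm of the weight vector; a Lindeberg/Lyapunov-type CLT (in the appropriate form for a mixing/renewal sequence, as in \cite{Paul_IFAC}) then applies as $\ell$ grows, and since \eqref{eq:MRE_algorithm} imposes $\ell \geq 10$ the Gaussian approximation is adequate. Combining the mean, variance, and limiting normality yields $\hat{A}_{ij,q}^{\pm} \sim \mathcal{N}\big(\E[A^{\pm}],\,\theta\,\E[A^{\pm}](1-\E[A^{\pm}])/(2\ell-1)\big)$.

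The main obstacle is that the alarm indicators are \emph{not} genuinely i.i.d.\ Bernoulli --- they are produced by a process with memory --- so neither the clean Bernoulli moment computation nor the classical CLT literally apply. The rigorous step is to argue that the alarm stream is a renewal process with i.i.d.\ inter-arrival times, hence stationary and strongly mixing, so that (i) its stationary marginal is $\mathrm{Bernoulli}(\E[A^{\pm}])$, (ii) the autocovariances contribute only a bounded multiplicative correction $\theta$, and (iii) a CLT for weakly dependent sequences delivers normality; this is precisely the analysis deferred to \cite{Paul_IFAC}.
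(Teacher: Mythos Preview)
The paper does not actually prove this proposition in the text: it is stated and immediately deferred to \cite{Paul_IFAC}, with no proof environment following it. So there is no in-paper argument to compare against.

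That said, your derivation is the standard one and almost certainly what \cite{Paul_IFAC} contains. The appearance of $2\ell-1$ in the denominator is the fingerprint of the steady-state EWMA variance identity
\[
    \mathrm{Var}\Big[\alpha\sum_{m\geq 0}(1-\alpha)^m X_m\Big] \;=\; \frac{\alpha}{2-\alpha}\,\mathrm{Var}[X_1]
\]
for i.i.d.\ inputs, evaluated at $\alpha=1/\ell$; your unrolling and geometric-sum computation recover exactly this. Your treatment of the mean (exact unbiasedness from the telescoping initialization) and of normality (Lindeberg-type argument with $O(1/\ell)$ maximal weight) are also the natural steps. The honest acknowledgment that the alarm indicators are not truly i.i.d.\ Bernoulli---they are a renewal indicator stream with serial dependence---and that this is precisely what the scaling constant $\theta$ absorbs, matches the role $\theta$ plays in the proposition statement. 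In short: your proposal is correct and is the expected route; the paper simply outsources the argument rather than writing it out.
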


By leveraging $\E[A^{\pm}] $ in Lemma \ref{lem:CUSIGN_exp_AR}, the following corollary provides detection bounds for the CUSIGN alarm rate. 

\begin{corollary} \label{cor:CUSIGN_bounds}
     Given the $q$th element of the inter-vehicle residual \eqref{eq:prediction_residual} being monitored by CUSIGN \eqref{pro:CUSIGN}, detection of attacks occur for a chosen level of significance $\alpha \in (0,1)$ when the alarm rate no longer satisfies detection bounds (i.e., $\hat{A}_{ij,q}^{(k),\pm} \not\in [\Omega_{-} , \Omega_{+}]$) where $\Omega_{\pm} = \E[A^{\pm}] \pm \Phi^{-1}\big( \alpha/2 \big) \sqrt{ \mathrm{Var}[A^{\pm}] }$, such that $\Phi^{-1}( \cdot )$ is the inverse cumulative distribution function of a normal distribution.
\end{corollary}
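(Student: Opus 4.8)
The plan is to derive Corollary~\ref{cor:CUSIGN_bounds} directly from the distributional result in Proposition~\ref{pro:CUSIGN_distribution}, treating it as a standard two-sided hypothesis test / confidence interval construction. First I would state the null hypothesis: under attack-free conditions the alarm rate $\hat{A}_{ij,q}^{(k),\pm}$ is a realization of a Gaussian random variable with mean $\E[A^{\pm}]$ and variance $\mathrm{Var}[A^{\pm}] = \frac{\theta\,\E[A^{\pm}](1-\E[A^{\pm}])}{2\ell-1}$, exactly as established in Proposition~\ref{pro:CUSIGN_distribution}. Standardizing, $Z = \big(\hat{A}_{ij,q}^{(k),\pm} - \E[A^{\pm}]\big)/\sqrt{\mathrm{Var}[A^{\pm}]}$ is a standard normal variable under the null.

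Next I would fix a level of significance $\alpha \in (0,1)$ and seek the symmetric interval around $\E[A^{\pm}]$ that contains the alarm rate with probability $1-\alpha$ under nominal behavior. Because the normal density is symmetric about its mean, the two-sided acceptance region for $Z$ is $[\,\Phi^{-1}(\alpha/2),\,\Phi^{-1}(1-\alpha/2)\,] = [\,\Phi^{-1}(\alpha/2),\,-\Phi^{-1}(\alpha/2)\,]$, since $\Phi^{-1}(1-\alpha/2) = -\Phi^{-1}(\alpha/2)$ for the standard normal quantile function. Un-standardizing this interval and reattaching the scale factor $\sqrt{\mathrm{Var}[A^{\pm}]}$ yields $\hat{A}_{ij,q}^{(k),\pm} \in \big[\,\E[A^{\pm}] + \Phi^{-1}(\alpha/2)\sqrt{\mathrm{Var}[A^{\pm}]},\; \E[A^{\pm}] - \Phi^{-1}(\alpha/2)\sqrt{\mathrm{Var}[A^{\pm}]}\,]$, which is precisely $[\Omega_-,\Omega_+]$ as written in the statement (noting $\Phi^{-1}(\alpha/2)<0$, so $\Omega_-$ is indeed the lower endpoint). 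A violation of this containment is therefore a rejection of the nominal hypothesis at significance $\alpha$, i.e., a detection; this establishes the claim.

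The only subtlety — and the place I would be most careful — is arguing that crossing the bound is legitimately declared an "attack" rather than merely a rare fluctuation, and that the false-alarm probability is controlled at $\alpha$. This relies on: (i) the Gaussian approximation of Proposition~\ref{pro:CUSIGN_distribution} being accurate for the chosen pseudo-window $\ell \ge 10$ (a finite-sample caveat inherited from \cite{Paul_IFAC}); (ii) the independence/stationarity assumptions on the inter-vehicle residual $r_{ij,q}^{(k)}$ from \eqref{eq:update_residual_distribution}–\eqref{eq:binomial_probs}, which make the running alarm sequence behave like i.i.d.\ Bernoulli trials with the rate from Lemma~\ref{lem:CUSIGN_exp_AR}; and (iii) the sign convention on $\Phi^{-1}(\alpha/2)$. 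None of these require new computation — they are bookkeeping — so the corollary follows essentially immediately once the distributional result is in hand. I would close by remarking that the detection test is run elementwise in $q$, and that the scaling constant $\theta$ and the $\theta$-dependent variance propagate unchanged from Proposition~\ref{pro:CUSIGN_distribution} into $\Omega_\pm$, so no separate characterization of $\mathrm{Var}[A^{\pm}]$ is needed here.
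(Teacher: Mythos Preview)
Your proposal is correct and matches the paper's intended approach: the paper itself does not spell out an argument for Corollary~\ref{cor:CUSIGN_bounds} but simply writes ``See \cite{Paul_IFAC} for a similar proof,'' and the sister result (Corollary~\ref{cor:RoR_bounds}) is justified with the one-line remark that it follows ``by leveraging confidence intervals within a Normal Distribution.'' Your standardize-then-invert derivation from Proposition~\ref{pro:CUSIGN_distribution} is exactly that confidence-interval argument, and your bookkeeping on the sign of $\Phi^{-1}(\alpha/2)$ and the inherited variance is the right level of care.
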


\begin{proof}
    See \cite{Paul_IFAC} for a similar proof.
\end{proof}

A vehicle $i$ that detects non-random (i.e., inconsistent) inter-vehicle residual behavior from a vehicle $j$, responds by placing vehicle $j$ in its compromised set $j \in \mathcal{R}_i$, $\mathcal{R}_i \subset \mathcal{V}$, hence removing it from the control graph (i.e., $(i,j) \not\in \mathcal{E}_{\mathcal{U}}$).

\subsection{Hidden Signature Detection} \label{sec:Hidden_Model}

During operations, vehicles are tasked to converge to observed objects of interest while navigating through the environment. As an $i$th vehicle comes within sensing distance $\delta_r$ of the on-board range sensor with respect to an object,
\begin{equation}
    l_{ip} = \| \bm{p}_i - \bm{p}_p \| \leq \delta_r , 
\end{equation}
where $\bm{p}_p$ is the position of an object of interest, the vehicle will notify neighboring vehicles by creating a detectable hidden signature. To achieve this, the vehicle switches to a \textit{hidden} virtual spring-damper model described by,
\begin{equation} \label{eq:signature_spring_force}
\bm{u}_i^{\mathcal{H}} = \ddot{\bm{p}}_i= \Big[ \kappa_{h}( l_{ip} - l^0_h ) \vec{\bm{d}}_{ip} - \gamma_h \dot{\bm{p}}_i \Big] \in \R^{m},
\end{equation}
where the virtual spring-damper parameters $\kappa_{h} \ne \kappa_{v}$ and/or $\gamma_h \ne \gamma_v$ are distinct from the primary network model in \eqref{eq:spring_force} to enable an identifiable dynamical signature. A vehicle $i$ that follows the hidden model \eqref{eq:signature_spring_force} removes all virtual spring interactions to neighboring vehicles and the goal from the primary network model that affect its control input. To maintain secrecy from attackers (with regards to the observance of the object of interest), a vehicle $i$ will continue to broadcast state, observed obstacle positions, and its neighbor set information to nearby vehicles as it would in nominal conditions. In this way, a malicious agent who is listening will continue to see the same type of information as before. Any manipulation of such information that does not conform with the new hidden model \eqref{eq:signature_spring_force} or with the primary model in \eqref{eq:spring_force} will be considered a cyber-attack.

The challenge that arises is that the object position $\bm{p}_p$ remains unknown to the other vehicles in the network. In comparison to the primary model \eqref{eq:spring_force}, nearby vehicles do not receive all necessary information when a vehicle $i$ follows the hidden model \eqref{eq:signature_spring_force} to monitor for consistency. This is due to constraints set in Problem \ref{problem2}, that information regarding a discovered object of interest can not be explicitly shared with the network to protect from interception by attackers.

Given that the hidden model \eqref{eq:signature_spring_force}, vehicle dynamics \eqref{eq:vehicle_dynamical_model}, and maximum sensing range $\delta_r$ are known by all vehicles, an expected vehicle velocity behavior can be leveraged as a vehicle converges toward an object (i.e., a decaying velocity magnitude). More specifically, any vehicle $i$ can recognize the hidden signature by monitoring the received velocity estimate $\hat{\bm{v}}_j^{(k)}$ behavior from a vehicle $j$ and compare it to the expected velocity decay behavior from the hidden model. Shown in Fig.~\ref{fig:VelocityDecay}(a) is an example of the differing expected velocity behavior between springs of the primary and hidden models with the corresponding distances to the object.

\begin{figure}[htb!]
\vspace{-4pt}
\begin{tabular}{cc}
\hspace{-10pt} \subfigure[\label{fig:vel} ]{\includegraphics[width = 0.24\textwidth]{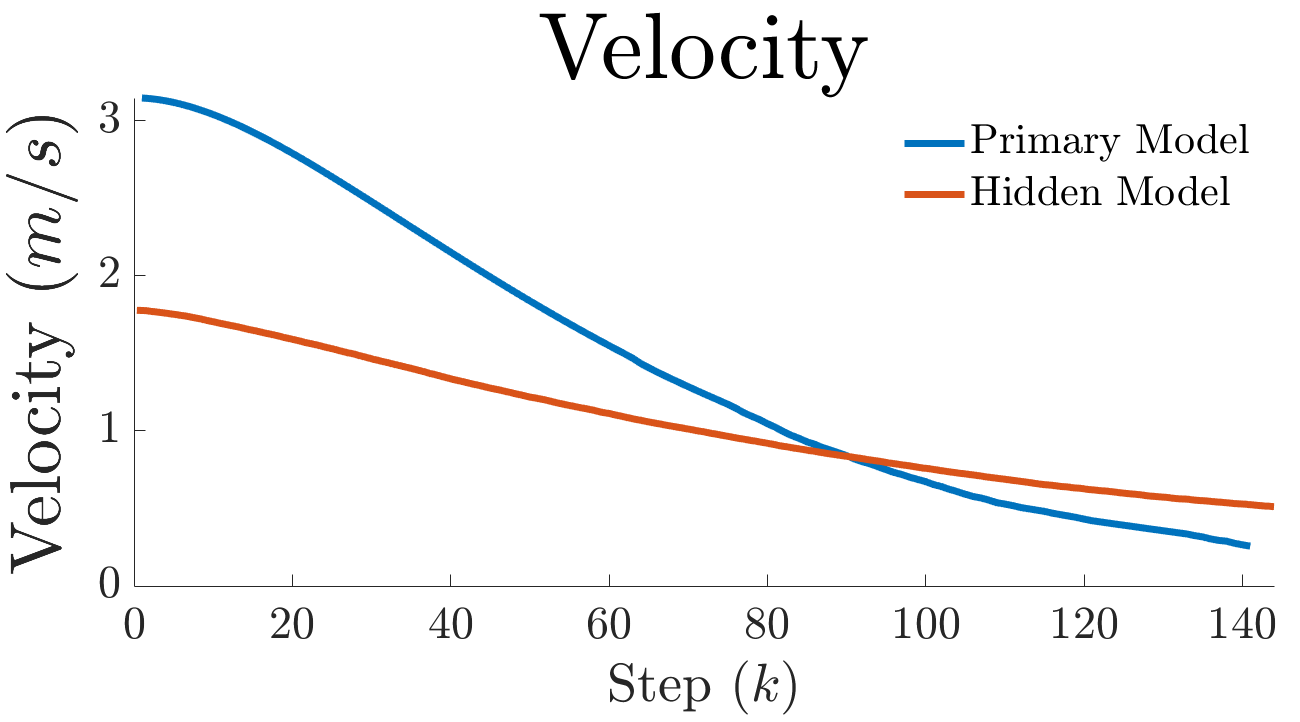}} &
\hspace{-14pt} \subfigure[\label{fig:dist} ]{\includegraphics[width = 0.24\textwidth]{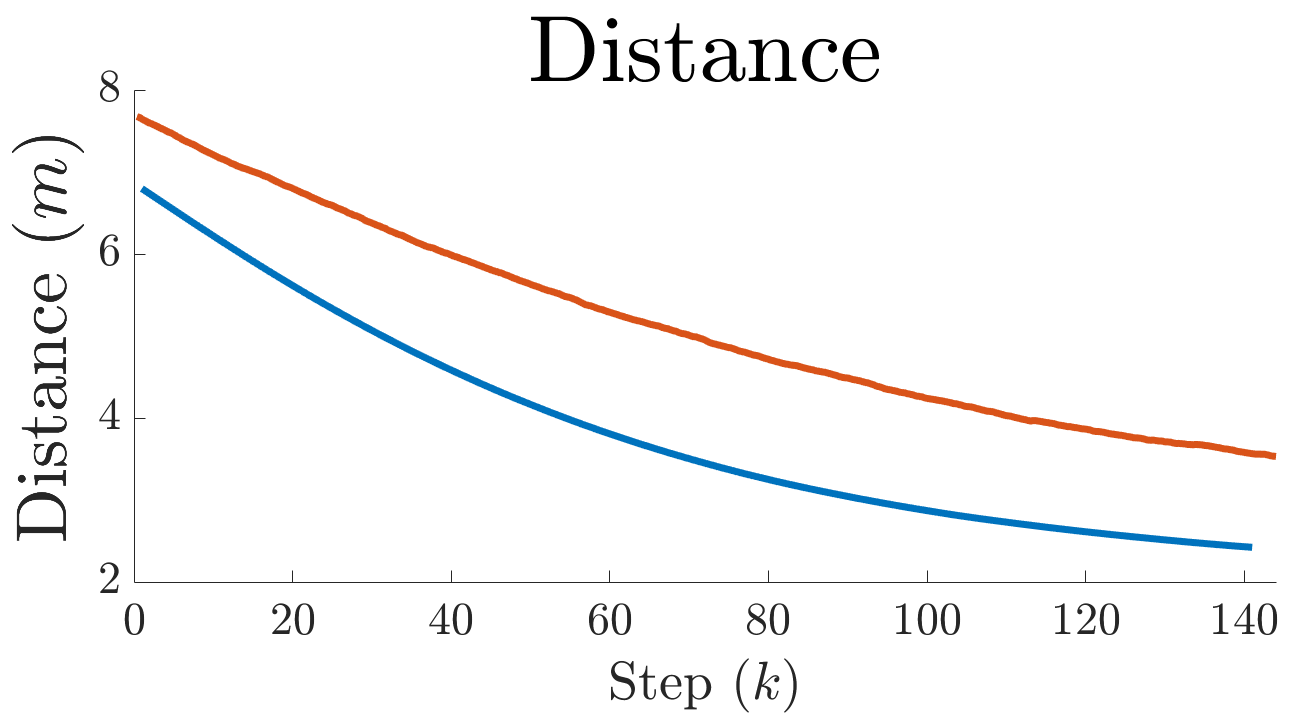}}
\end{tabular}
\vspace{-13pt}
\caption{Differing expected behavior of the (a) velocity decay, and (b) distance to the object for the two different virtual spring-damper models.}
\label{fig:VelocityDecay}
\vspace{-3pt}
\end{figure}

Given that each vehicle $i$ is making state predictions of any neighboring vehicle $j$ according to the primary network model \eqref{eq:spring_force}, an alternative action by this vehicle (i.e., utilizing the hidden model) would result in an unexpected behavior. 
Alarm rates from CUSIGN \eqref{pro:CUSIGN} on-board a vehicle $i$ that is monitoring vehicle $j$, in turn, go beyond detection bounds due to the unexpected behavior and vehicle $j$ is placed in the compromised vehicle set $\mathcal{R}_i \subset \mathcal{V}$. Next, vehicle $i$ would begin to monitor the received velocity information of vehicle $j$ to determine if its behavior follows the hidden model \eqref{eq:signature_spring_force}. A velocity prediction of vehicle $j$ by a vehicle $i$ given $j \in \mathcal{R}_i$ is made from the received velocity estimate $\| \hat{\bm{v}}_{j}^{(k)} \| $ by,
\begin{equation} \label{eq:velocity_prediction}
    \bar{v}_{ij}^{(k+1)} = h \big( \| \hat{\bm{v}}_{j}^{(k)} \|  \big),
\end{equation}
where the function $h(\cdot)$ represents the expected velocity behavior according to the hidden spring model \eqref{eq:signature_spring_force}, as shown in Fig.~\ref{fig:vel}. At each time iteration $k$, the \textit{hidden velocity residual} $\breve{r}_{ij}^{(k)}$ --- the difference between received velocity magnitudes and velocity predictions using the hidden model --- of vehicle $j$ is computed by the following,
\begin{equation} \label{eq:velocity_residual}
    \breve{r}_{ij}^{(k)} = \| \hat{\bm{v}}_{j}^{(k)} \| - \bar{v}_{ij}^{(k)} \in \R,
\end{equation}
to monitor whether vehicle $j$ is following the hidden model in \eqref{eq:signature_spring_force}. We leverage the known zero-mean Normally distributed velocity estimate provided by vehicle $j$ (see estimation error covariance in \cite{CUSUM1,Paul_ACC,Paul_IFAC}) when characterizing the hidden velocity residual. An assumption can be made such that the received velocity estimate information from vehicle $j$ is also approximately zero-mean Normally distributed around the expected velocity decay behavior in $h \big( \| \hat{\bm{v}}_{j}^{(k)} \| \big)$, \textit{only} if vehicle $j$ is following the hidden model. In this scenario, the hidden velocity residual \eqref{eq:velocity_residual} is expressed as a random variable that presents the following characteristics:
\begin{equation}
\label{eq:difference_probs}
    \begin{split}
    \Pr\big( \breve{r}_{ij}^{(k)} < 0 \big) &= \breve{p}_- = 0.5, \\[-2pt]
    \Pr\big( \breve{r}_{ij}^{(k)} > 0 \big) &= \breve{p}_+ = 0.5,
    \end{split}
    \vspace{-1pt}
\end{equation}
where the probability of the hidden velocity residual being greater or less than zero is equal. A random variable with characteristics that follow \eqref{eq:difference_probs} should present an expected sign switching rate behavior (i.e. how frequently $\breve{r}_{ij}^{(k)}$ changes signs) in accordance to the probabilities in \eqref{eq:difference_probs}. To capture the rate of sign switching, we leverage an alarm that is triggered (i.e., $\psi_{ij}^{(k)} = 1$) when a sign switch occurs at a time $k$. The procedure to trigger a sign switching alarm follows: 
\begin{equation} \label{eq:Sign_Switch}
\begin{split}
    \psi_{ij}^{(k)} = \left\{ \begin{array}{rl}
	1, & \text{if } \mathrm{sgn}\big( \breve{r}_{ij}^{(k)} \big) = -\mathrm{sgn}\big( \breve{r}_{ij}^{(k-1)} \big), \\[1pt]
    0, & \text{otherwise}. 
    \end{array} \right.
\end{split}
\end{equation} 

The sign switching alarm $\psi_{ij}^{(k)} \in \{ 0, 1 \}$ is then sent into the MRE algorithm \eqref{eq:MRE_algorithm} to provide an updated run-time estimate of the hidden signature sign switching alarm rate $\hat{H}_{ij}^{(k)} \in [0,1]$ at time instance $k$.

\begin{figure*}[htb!]
\begin{tabular}{cccc}
\hspace{-4pt} \subfigure[\label{fig:first} ]{\setlength{\fboxsep}{0pt}\fbox{\includegraphics[width = 0.235\textwidth]{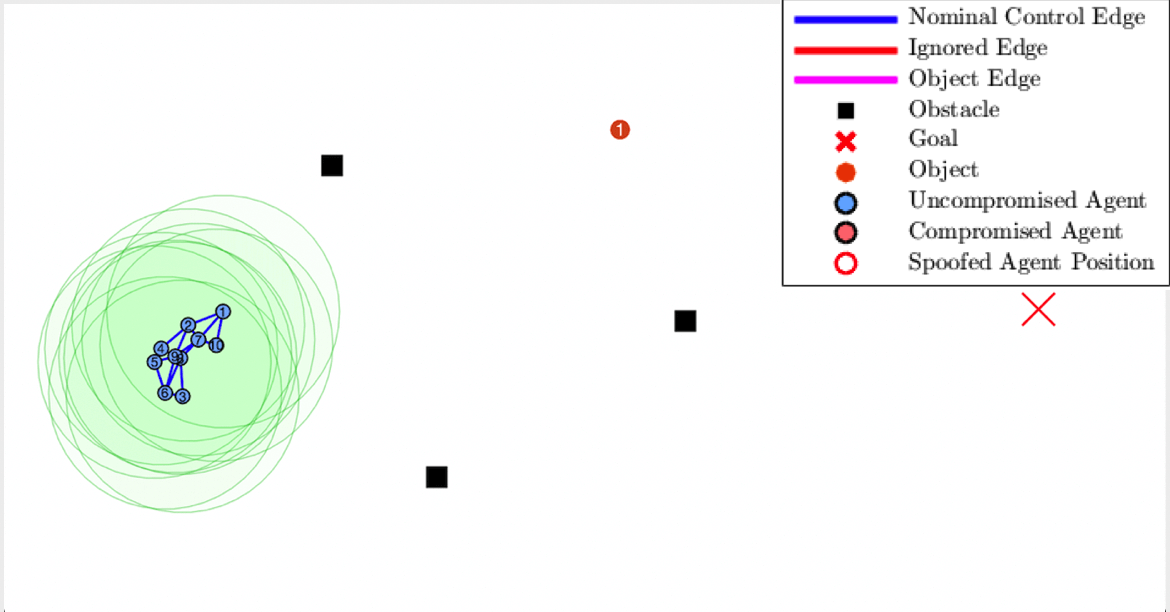}}} &
\hspace{-10pt} \subfigure[\label{fig:second} ]{\setlength{\fboxsep}{0pt}\fbox{\includegraphics[width = 0.235\textwidth]{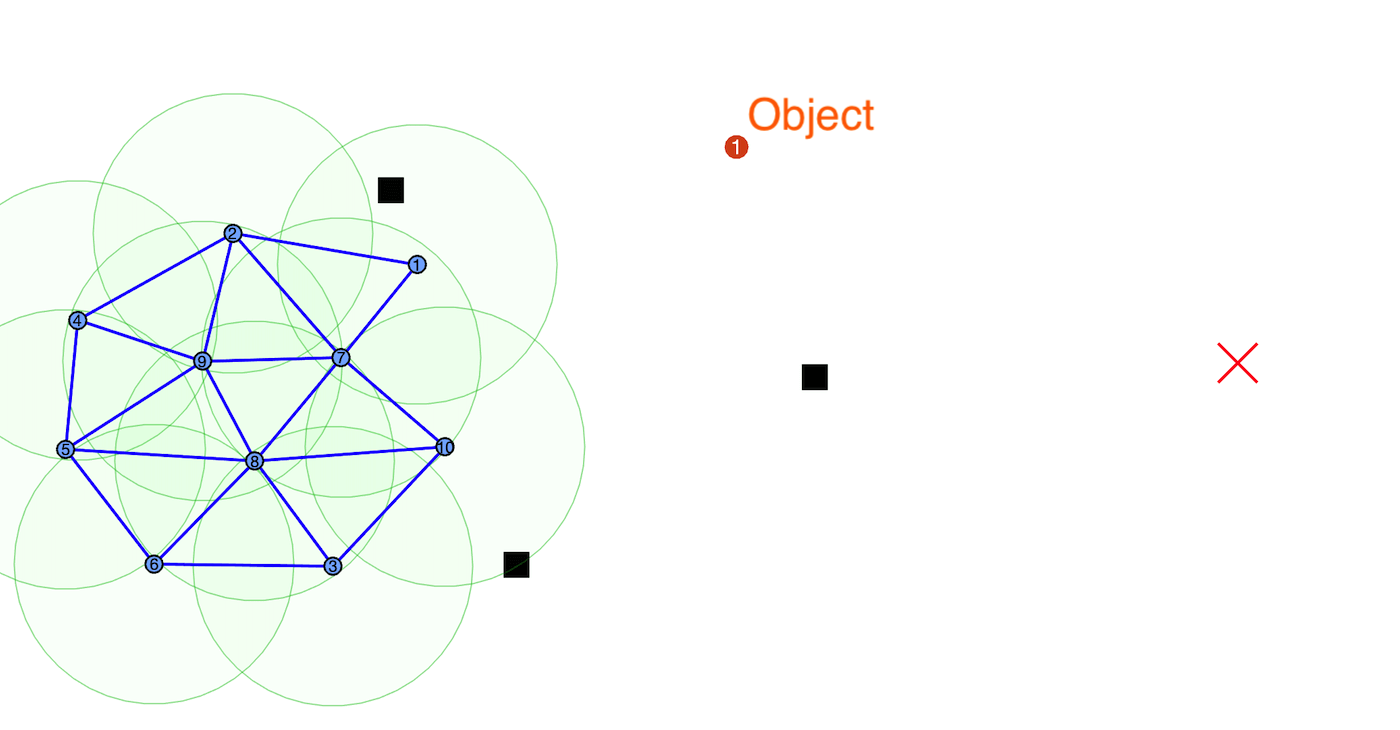}}} &
\hspace{-10pt} \subfigure[\label{fig:third} ]{\setlength{\fboxsep}{0pt}\fbox{\includegraphics[width = 0.235\textwidth]{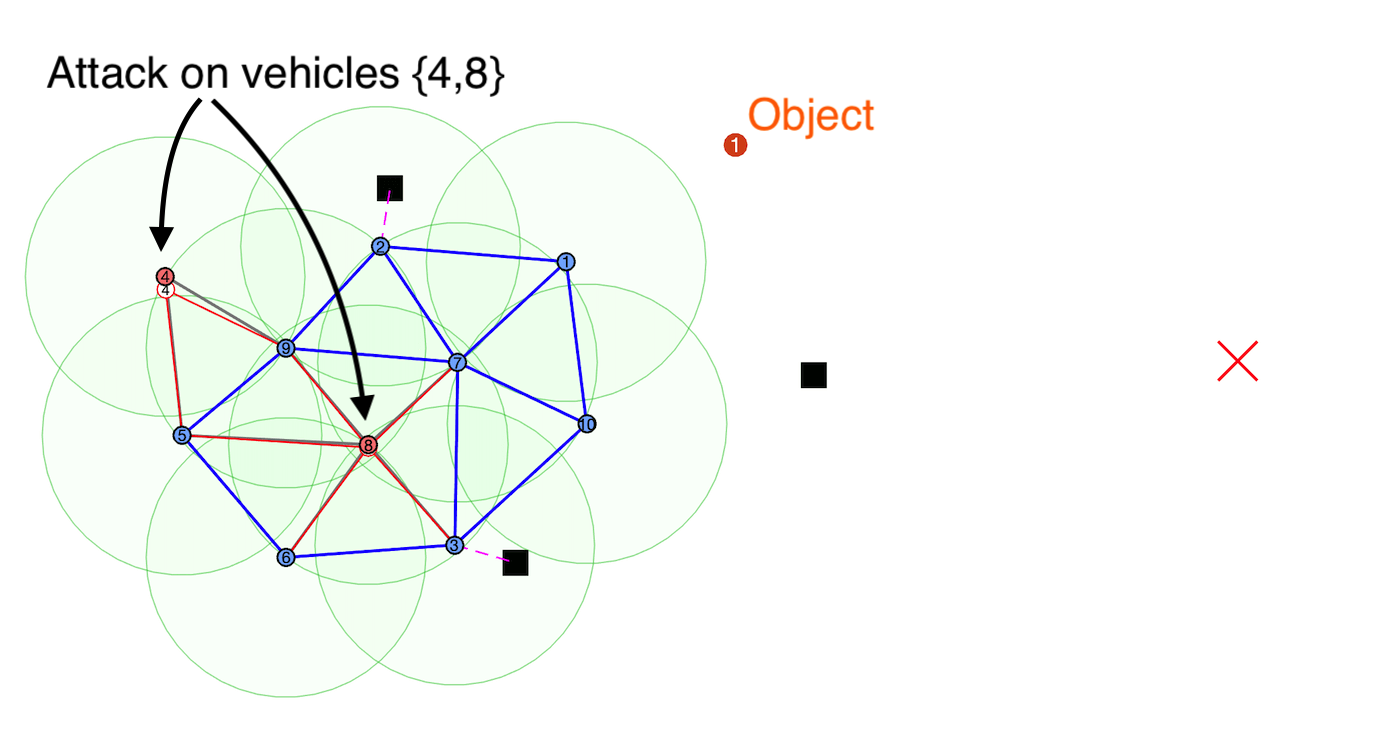}}} &
\hspace{-10pt} \subfigure[\label{fig:fourth} ]{\setlength{\fboxsep}{0pt}\fbox{\includegraphics[width = 0.235\textwidth]{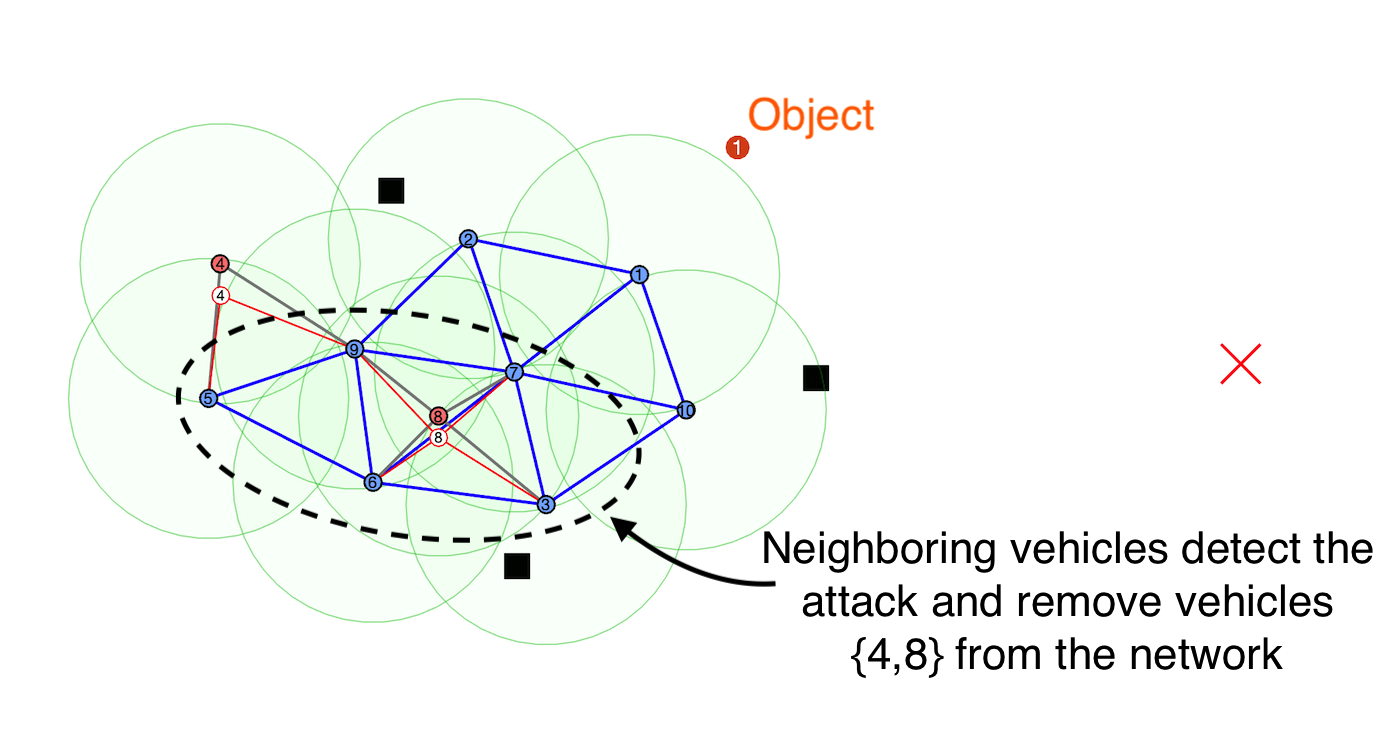}}} \\[-2pt]
\hspace{-4pt} \subfigure[\label{fig:five} ]{\setlength{\fboxsep}{0pt}\fbox{\includegraphics[width = 0.235\textwidth]{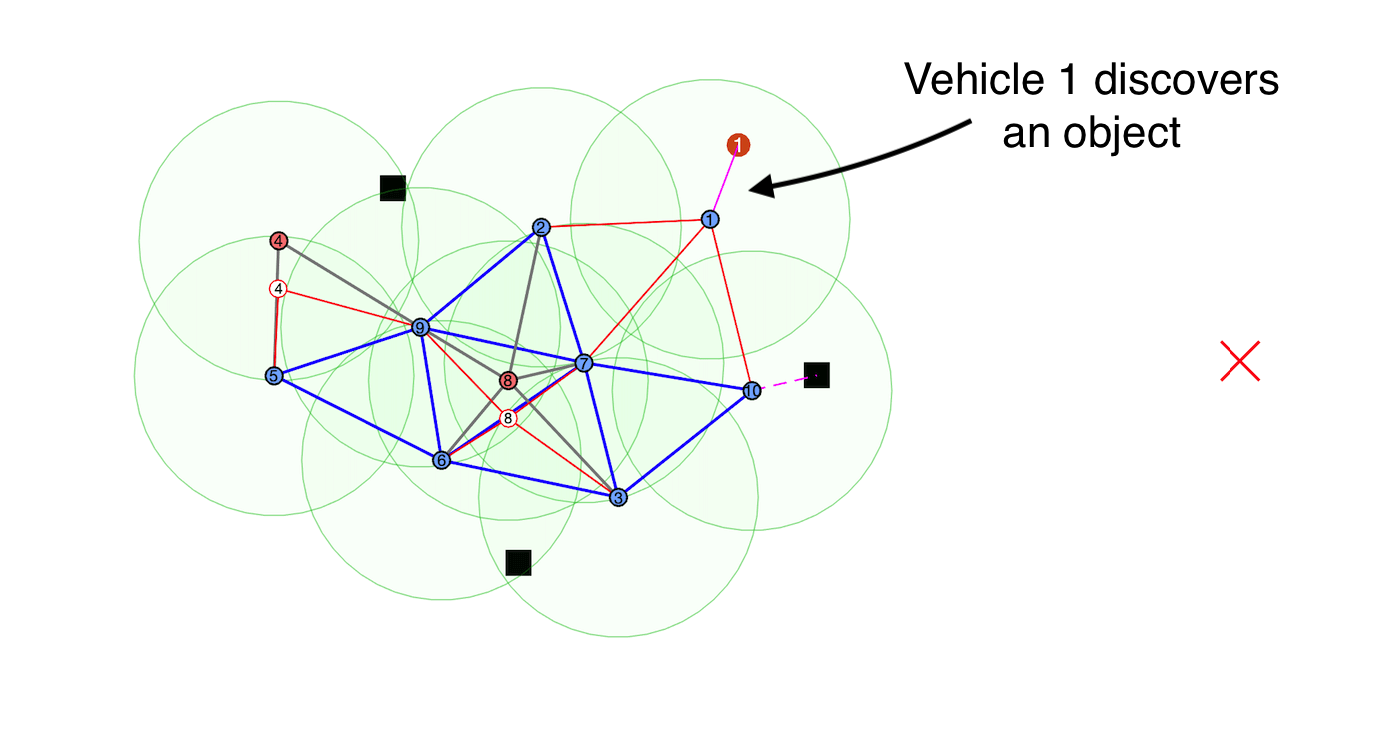}}} &
\hspace{-10pt} \subfigure[\label{fig:six} ]{\setlength{\fboxsep}{0pt}\fbox{\includegraphics[width = 0.235\textwidth]{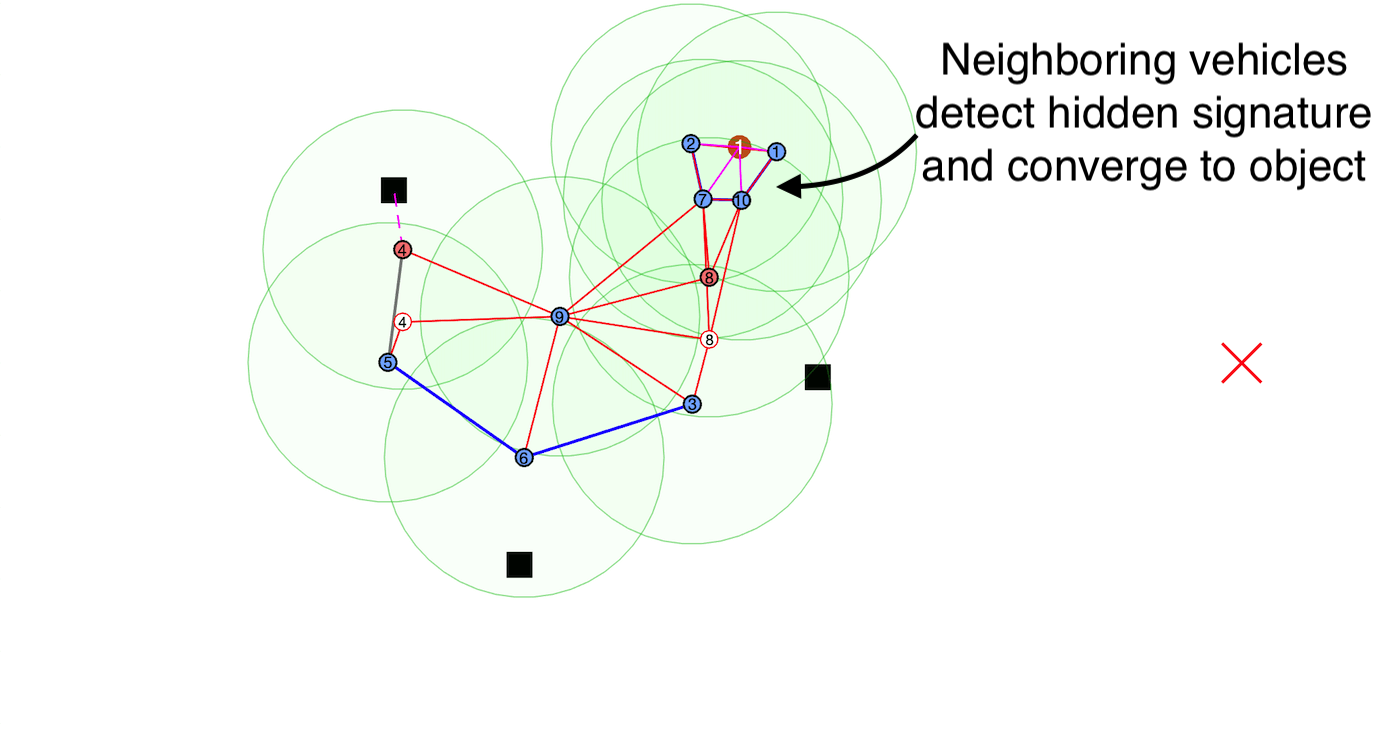}}} &
\hspace{-10pt} \subfigure[\label{fig:seven} ]{\setlength{\fboxsep}{0pt}\fbox{\includegraphics[width = 0.235\textwidth]{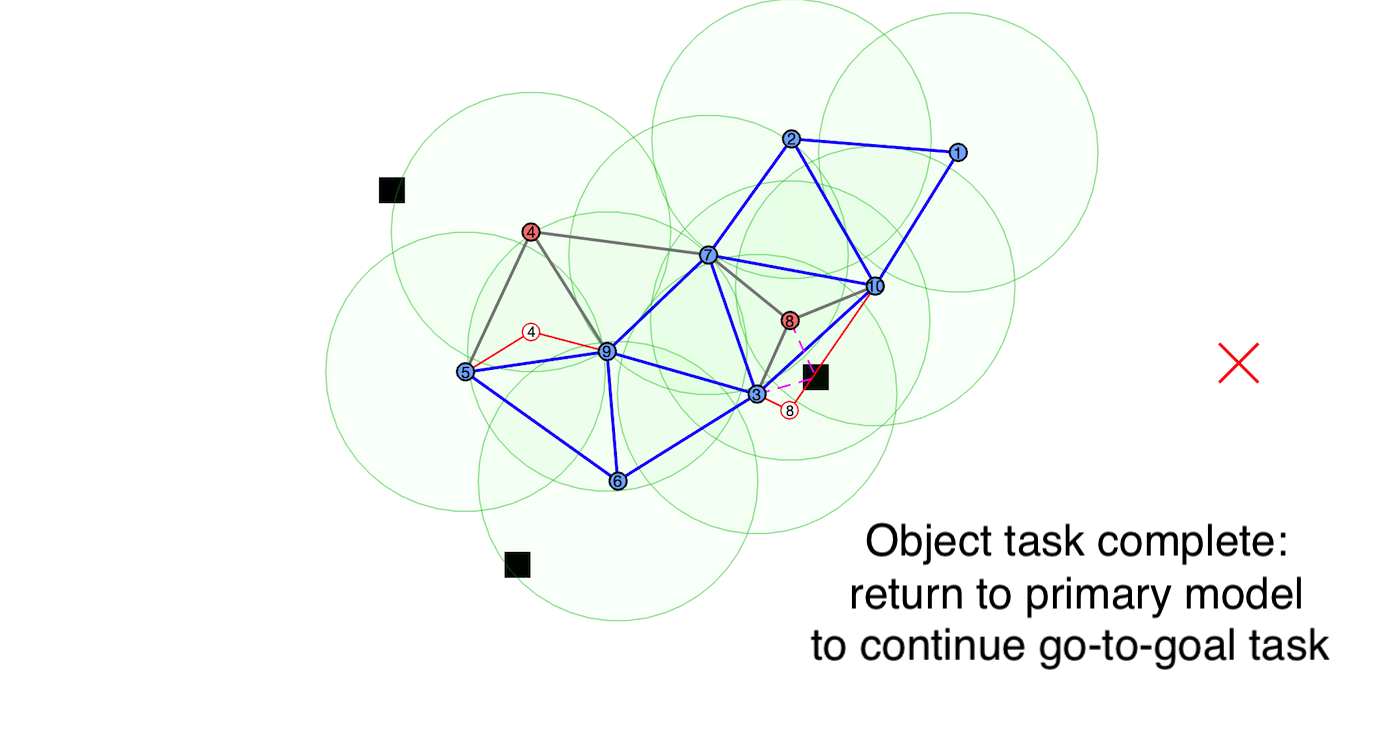}}} &
\hspace{-10pt} \subfigure[\label{fig:eight} ]{\setlength{\fboxsep}{0pt}\fbox{\includegraphics[width = 0.235\textwidth]{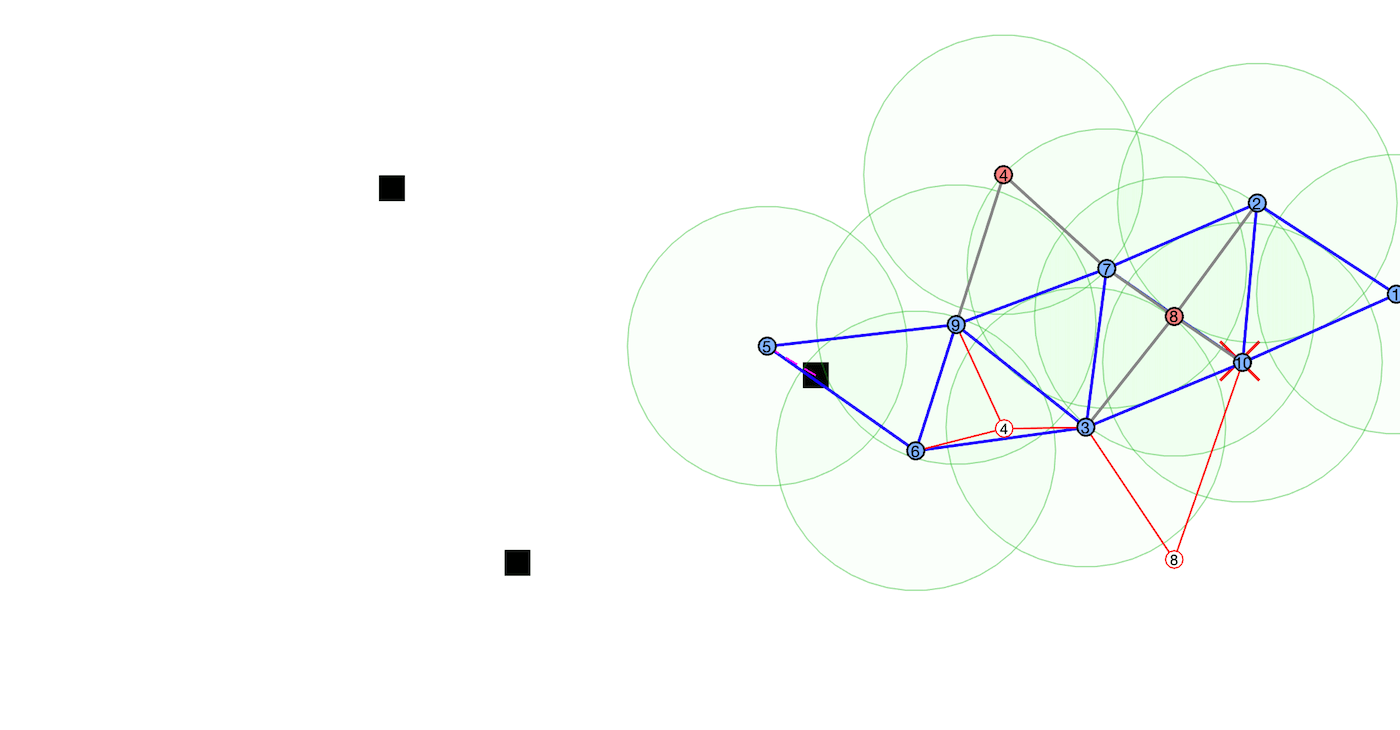}}}
\end{tabular}
\vspace{-10pt}
\caption{A network of $N = 10$ vehicles resiliently navigate through an obstacle-filled (black squares) environment to a desired goal (red `X'). Vehicles converge to an object (orange disk) as it comes within their viewing range (green disks) or if they detect the hidden signature from neighboring vehicles.}
\label{fig:Simulation}
\vspace{-12pt}
\end{figure*}

\begin{lemma} \label{lem:RoR_expected_AR}
   Given a vehicle $j$ that is following the hidden model \eqref{eq:signature_spring_force} while being monitored by vehicle $i$, the expected sign switching rate to signify random behavior is $\E[{H}] = \frac{1}{2}$.
\end{lemma}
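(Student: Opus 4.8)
The plan is to reduce this statement about the \emph{rate} $\E[H]$ to a one-step computation of the probability that the sign-switch alarm $\psi_{ij}^{(k)}$ in \eqref{eq:Sign_Switch} fires, and then push that value through the MRE recursion \eqref{eq:MRE_algorithm}. Concretely, I would first show that, under the hidden-model assumptions, $\psi_{ij}^{(k)}$ is a Bernoulli$(1/2)$ random variable, and then argue --- exactly as for Proposition~\ref{pro:CUSIGN_distribution} --- that the MRE estimate $\hat H_{ij}^{(k)}$ is an unbiased running average of these alarms, so that $\E[H]=\E[\psi_{ij}^{(k)}]=1/2$.

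The first step is to use the characterization \eqref{eq:difference_probs}: when vehicle $j$ follows the hidden spring-damper model \eqref{eq:signature_spring_force}, the received velocity magnitude is approximately zero-mean Gaussian about the predicted decay $\bar v_{ij}^{(k)}=h(\|\hat{\bm v}_j^{(k-1)}\|)$, so each hidden velocity residual $\breve r_{ij}^{(k)}$ is symmetric about $0$ and $\mathrm{sgn}(\breve r_{ij}^{(k)})$ takes the values $\pm 1$ with equal probability $\breve p_{+}=\breve p_{-}=1/2$. The second step is to observe that the same modeling assumption (independent estimation-noise realizations at distinct iterations) makes the signs at times $k-1$ and $k$ independent; a sign switch is then the union of the two disjoint events $\{\breve r_{ij}^{(k-1)}<0,\ \breve r_{ij}^{(k)}>0\}$ and $\{\breve r_{ij}^{(k-1)}>0,\ \breve r_{ij}^{(k)}<0\}$, whose probabilities sum to $\breve p_{-}\breve p_{+}+\breve p_{+}\breve p_{-}=1/2$, giving $\E[\psi_{ij}^{(k)}]=1/2$. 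The third step is routine bookkeeping: taking expectations of \eqref{eq:MRE_algorithm} with $\hat H_{ij}^{(0)}=\E[H]$ and using stationarity of the alarm stream shows the recursion preserves the mean, so $\E[\hat H_{ij}^{(k)}]=1/2$, mirroring the CUSIGN alarm-rate argument in \cite{Paul_IFAC}.

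The step I expect to be the main obstacle is justifying that consecutive residual signs are independent (equivalently, uncorrelated in the Gaussian case). If $(\breve r_{ij}^{(k-1)},\breve r_{ij}^{(k)})$ were jointly Gaussian with correlation $\rho$, the bivariate orthant-probability identity would give $\Pr(\psi_{ij}^{(k)}=1)=\tfrac1\pi\arccos\rho$, which equals $1/2$ only when $\rho=0$. Hence the clean value $\E[H]=1/2$ genuinely rests on the hidden-model assumption that the received velocity estimates at successive iterations are (approximately) uncorrelated about the decay curve $h(\cdot)$; once that is granted, the remaining argument is the immediate counting computation above together with the unbiasedness of the MRE filter.
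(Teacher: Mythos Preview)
Your argument is correct and lands on the same key identity as the paper, $\E[H]=2\breve p_{+}\breve p_{-}=\tfrac12$, but you reach it by a more elementary route. The paper's proof invokes the Wald--Wolfowitz runs test: it starts from the asymptotic expected number of runs $\E[U]$ in a binary sequence and then converts $\E[U]$ over the sequence length to a per-step rate, obtaining $2\breve p_{+}\breve p_{-}$. You instead compute the one-step switch probability $\Pr(\psi_{ij}^{(k)}=1)$ directly from independence of consecutive signs, which is exactly the same calculation stripped of the runs-test language. Your approach is more self-contained (no external reference needed) and makes the independence hypothesis explicit---your orthant-probability remark that $\Pr(\text{switch})=\tfrac1\pi\arccos\rho$ is a nice diagnostic for when the $\tfrac12$ value would fail. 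You also carry the argument one step further than the paper by pushing the Bernoulli mean through the MRE recursion \eqref{eq:MRE_algorithm}; the paper's proof stops at the per-step rate and leaves that passage implicit.
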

\vspace{1pt}

\begin{proof} \label{proof:RoR_expected_AR}
    We first examine the asymptotic distribution of the expected number of observed runs $\E[U]$ from the Wald-Wolfowitz runs test \cite{wald1940}. Then, we convert $\E[U]$ over a defined sequence length to a rate described by how frequently runs should occur (i.e., how often sign switching occurs) by leveraging the known characteristics of the probabilities $\breve{p}_+$ and $\breve{p}_-$, such that the random variable follows $\E[H] = 2 \breve{p}_+\breve{p}_- = \frac{1}{2},$ thus concluding the proof.
\end{proof}

\begin{lemma} \label{lem:RoR_variance_AR}
    The expected variance of the sign switching rate $\hat{H}_{ij}^{(k)}$ for a vehicle $j$ that follows the hidden model \eqref{eq:signature_spring_force} while monitored by vehicle $i \in \mathcal{V}$ is $\mathrm{Var}[{H}] = \frac{1}{4(2 \ell -1)}$.
\end{lemma}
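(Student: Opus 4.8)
The plan is to view the MRE recursion \eqref{eq:MRE_algorithm}, when driven by the sign-switching alarm $\psi_{ij}^{(k)}$, as an exponentially weighted moving average (EWMA) with smoothing parameter $\lambda = 1/\ell$, and to compute its steady-state variance. Rewriting \eqref{eq:MRE_algorithm} as $\hat{H}_{ij}^{(k)} = (1-\lambda)\hat{H}_{ij}^{(k-1)} + \lambda\,\psi_{ij}^{(k)}$ and unrolling from the initial condition $\hat{H}_{ij}^{(0)} = \E[H]$ yields
\begin{equation}
  \hat{H}_{ij}^{(k)} = (1-\lambda)^{k}\,\E[H] + \lambda\sum_{m=0}^{k-1}(1-\lambda)^{m}\,\psi_{ij}^{(k-m)} ,
\end{equation}
so the variance of $\hat{H}_{ij}^{(k)}$ is determined entirely by the geometrically weighted sum of past alarms, the deterministic first term contributing nothing.

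The key step I would carry out next is to pin down the second-order statistics of the alarm sequence $\{\psi_{ij}^{(k)}\}$. By Lemma \ref{lem:RoR_expected_AR}, under the hidden model each $\psi_{ij}^{(k)}$ is Bernoulli with $\Pr(\psi_{ij}^{(k)}=1)=\E[H]=\tfrac12$, hence $\mathrm{Var}[\psi_{ij}^{(k)}]=\E[H](1-\E[H])=\tfrac14$. I would then argue that distinct alarms are uncorrelated: letting $X_{k}=\mathrm{sgn}(\breve{r}_{ij}^{(k)})\in\{-1,+1\}$, the defining property \eqref{eq:difference_probs} of the hidden velocity residual together with the independence of successive residuals inherited from the zero-mean Gaussian estimation error makes the $X_{k}$ i.i.d.\ fair signs, and $\psi_{ij}^{(k)}=\mathbf{1}\{X_{k}\neq X_{k-1}\}$. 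For lag one, conditioning on the shared sign $X_{k}$ renders $\mathbf{1}\{X_{k}\neq X_{k-1}\}$ and $\mathbf{1}\{X_{k+1}\neq X_{k}\}$ conditionally independent, each of conditional mean $\tfrac12$, so $\E[\psi_{ij}^{(k)}\psi_{ij}^{(k+1)}]=\tfrac14=\E[\psi_{ij}^{(k)}]\E[\psi_{ij}^{(k+1)}]$; for lags $\geq 2$ the alarms depend on disjoint blocks of signs and are independent. Thus $\{\psi_{ij}^{(k)}\}$ is a zero-autocovariance (white) sequence, which is exactly the special case $\theta=1$ of Proposition \ref{pro:CUSIGN_distribution} --- no scaling constant is needed here because the alarm process inherits no memory from a cumulative-sum detector.

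With the alarms uncorrelated and identically distributed, I would take the variance of the unrolled recursion term by term to get $\mathrm{Var}[\hat{H}_{ij}^{(k)}]=\lambda^{2}\sum_{m=0}^{k-1}(1-\lambda)^{2m}\,\mathrm{Var}[\psi]$, then let $k\to\infty$ and sum the geometric series, $\sum_{m=0}^{\infty}(1-\lambda)^{2m}=\bigl(1-(1-\lambda)^{2}\bigr)^{-1}=\bigl(\lambda(2-\lambda)\bigr)^{-1}$, giving $\mathrm{Var}[H]=\frac{\lambda}{2-\lambda}\,\mathrm{Var}[\psi]$. Substituting $\lambda=1/\ell$ gives $\frac{\lambda}{2-\lambda}=\frac{1}{2\ell-1}$, and with $\mathrm{Var}[\psi]=\tfrac14$ this yields $\mathrm{Var}[H]=\frac{1}{4(2\ell-1)}$, as claimed.

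I expect the main obstacle to be the rigorous justification of the white-noise property of $\{\psi_{ij}^{(k)}\}$ --- in particular the lag-one computation and the underlying modeling assumption that, under the hidden model, successive hidden velocity residuals $\breve{r}_{ij}^{(k)}$ may be treated as independent and symmetric so that their signs form an i.i.d.\ fair-coin sequence. Once that is granted, the remaining geometric-series bookkeeping and the $k\to\infty$ limit are routine, mirroring the EWMA variance argument already invoked for the CUSIGN alarm rate in Proposition \ref{pro:CUSIGN_distribution}.
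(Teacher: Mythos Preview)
Your proposal is correct and follows essentially the same approach as the paper: model each sign-switch alarm as Bernoulli with success probability $\E[H]=\tfrac12$, then feed it through the MRE/EWMA recursion to obtain the steady-state variance $\E[H](1-\E[H])/(2\ell-1)=\tfrac{1}{4(2\ell-1)}$. The paper's own proof is a two-line sketch that simply invokes the MRE variance formula (the $\theta=1$ instance of Proposition~\ref{pro:CUSIGN_distribution}) without justification; your explicit EWMA unrolling and, in particular, your lag-one conditioning argument showing $\{\psi_{ij}^{(k)}\}$ is a white sequence supply the rigor the paper omits.
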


\begin{proof} \label{proof:RoR_variance_AR}
    Let the expectation of a sign switch be modeled by a Binomial distribution where the probability of success (i.e., sign switch) is $\E[H]$. By normal approximation and utilizing MRE \eqref{eq:MRE_algorithm} for sign switching rate estimation, the random variable follows a normal distribution with variance $\mathrm{Var}[H] = \frac{\E[H](1-\E[H])}{2\ell-1} = \frac{1}{4(2 \ell -1)}$, concluding the proof.
\end{proof}

The following corollary provides bounds of $\hat{H}_{ij}^{(k)}$ to satisfy an expected behavior to detect the hidden model signature.
\begin{corollary} \label{cor:RoR_bounds}
Given the sequence of hidden velocity residuals $\breve{r}_{ij}^{(k)}$, hidden signature detection occurs by the sign switching alarm rate when $ \Psi_- \leq \hat{H}_{ij}^{(k)} \leq \Psi_+$ is satisfied.
\end{corollary}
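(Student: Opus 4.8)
The plan is to mirror the construction of the CUSIGN detection bounds in Corollary~\ref{cor:CUSIGN_bounds}, now applied to the hidden signature sign switching alarm rate $\hat{H}_{ij}^{(k)}$. First I would invoke the distributional characterization already assembled in the two preceding lemmas: under the hypothesis that vehicle $j$ executes the hidden model \eqref{eq:signature_spring_force}, Lemma~\ref{lem:RoR_expected_AR} fixes the mean of the estimator at $\E[H]=\tfrac12$, Lemma~\ref{lem:RoR_variance_AR} fixes its variance at $\mathrm{Var}[H]=\tfrac{1}{4(2\ell-1)}$, and the normal-approximation argument used in the proof of Lemma~\ref{lem:RoR_variance_AR} (Binomial sign-switch events \eqref{eq:Sign_Switch} filtered through the MRE recursion \eqref{eq:MRE_algorithm}) gives that $\hat{H}_{ij}^{(k)}$ is, to that approximation, Gaussian, i.e.\ $\hat{H}_{ij}^{(k)}\sim\mathcal{N}\!\big(\tfrac12,\tfrac{1}{4(2\ell-1)}\big)$.

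Given this, for a user-chosen level of significance $\alpha\in(0,1)$ I would define the acceptance region for consistent hidden-signature behavior as the central $(1-\alpha)$ probability mass of that Gaussian, exactly as $\Omega_\pm$ was built from $\E[A^\pm]$ in Corollary~\ref{cor:CUSIGN_bounds}, which yields
\begin{equation*}
    \Psi_{\pm} = \E[H] \pm \Phi^{-1}\!\big(\alpha/2\big)\sqrt{\mathrm{Var}[H]} = \frac{1}{2} \pm \frac{\Phi^{-1}\!\big(\alpha/2\big)}{2\sqrt{2\ell-1}},
\end{equation*}
with $\Phi^{-1}(\cdot)$ the inverse Gaussian CDF. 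Then $\Pr\big(\Psi_-\le \hat{H}_{ij}^{(k)}\le \Psi_+\big)=1-\alpha$ whenever vehicle $j$ follows the hidden model, so a measured alarm rate falling inside $[\Psi_-,\Psi_+]$ is declared consistent with the hidden signature and otherwise it is not. Combining this with the earlier observation that a vehicle obeying the primary model \eqref{eq:spring_force}, or one subject to an MITM attack, produces a persistently-signed hidden velocity residual \eqref{eq:velocity_residual} — hence a sign switching rate well below $\Psi_-$ — completes the identification claimed in the statement.

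The main obstacle I anticipate is not the algebra but justifying the Gaussian approximation underlying both lemmas: the sign-switch indicators $\psi_{ij}^{(k)}$ are produced by a sliding comparison of consecutive residual signs and are therefore not i.i.d.\ Bernoulli$(\tfrac12)$, while the MRE recursion introduces exponential-type memory of effective length $\sim\ell$; making ``$\hat H$ is approximately normal with the stated moments'' rigorous relies on the mixing/pseudo-window arguments already established in \cite{Paul_IFAC}, together with the caveat that $\ell$ must be large enough that $[\Psi_-,\Psi_+]\subset[0,1]$, since $\hat{H}_{ij}^{(k)}$ is a rate. A secondary point is the transient of the MRE before it settles into its stationary regime, which I would dispatch either by the initialization $\hat{H}_{ij}^{(0)}=\E[H]$ in \eqref{eq:MRE_algorithm} or by activating the test only after a burn-in of $\ell$ samples.
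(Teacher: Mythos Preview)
Your proposal is correct and follows essentially the same approach as the paper: the paper's own proof simply states that the result is obtained ``by leveraging confidence intervals within a Normal Distribution'' and omits the details due to page limitations. Your explicit construction of $\Psi_\pm = \E[H] \pm \Phi^{-1}(\alpha/2)\sqrt{\mathrm{Var}[H]}$, built from Lemmas~\ref{lem:RoR_expected_AR} and~\ref{lem:RoR_variance_AR} in direct analogy with Corollary~\ref{cor:CUSIGN_bounds}, is exactly what the paper has in mind, and your remarks on the Gaussian approximation and MRE transient are reasonable caveats that the paper does not address.
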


\begin{proof}
A proof can be obtained by leveraging confidence intervals within a Normal Distribution. Due to page limitations, we omit the proof.
\end{proof}

To summarize Corollary \ref{cor:RoR_bounds}, when the sign switching alarm rate for the detection of the hidden signature satisfies,
\begin{equation} \label{eq:Signature_detect}
    \hat{H}_{ij}^{(k)} \in [\Psi_-, \Psi_+]  \longrightarrow  \textit{Signature Detection},
\end{equation} 
vehicle $i$ detects a hidden signature behavior in $j$. Vehicle $i$ reacts by estimating the position of the object by leveraging the training set (i.e., expected hidden model behavior) mapping $f : \R^2 \rightarrow \R$ in Fig.~\ref{fig:VelocityDecay} that maps the received velocity estimate of vehicle $j$ to its distance to the object by,
\begin{equation} \label{eq:vel_to_dist_mapping}
    \hat{d}^{(k)}_{p,ij} = f( \| \hat{\bm{v}}_j^{(k)} \| ).
\end{equation}

The position of the object $\bm{p}_p$ is then estimated by vehicle $i$ from the received position information of vehicle $j$ by,
\begin{equation} \label{eq:object_estimate}
    \hat{\bm{p}}_{p,ij}^{(k)} = \hat{\bm{p}}_{j}^{(k)} + \hat{d}^{(k)}_{p,ij} \vec{\bm{d}}(\| \hat{\bm{v}}_j^{(k)} \|), 
\end{equation}
where $ \vec{\bm{d}}(\cdot)$ is a unit vector indicating velocity direction of vehicle $j$. Once the object position coordinates $\hat{\bm{p}}_{p,ij}^{(k)}$ have been estimated, vehicle $i$ then detaches virtual springs from its neighbors and goal to converge to the object of interest by also following the hidden model \eqref{eq:Signature_detect}. Vehicles within the network will continue to converge toward the point of interest until its task is completed. Upon completion, all vehicles involved with the hidden task return to their normal control behavior with the primary network model in \eqref{eq:spring_force}.

\section{Results} \label{sec:results}

Our approach is validated with Matlab simulations and experiments using swarms of Turtlebot 2 robots, as shown in the provided video. In both case studies, the vehicle networks leverage a primary network model \eqref{eq:spring_force} to perform a go-to-goal task and a hidden model \eqref{eq:signature_spring_force} to covertly notify surrounding vehicles when an object of interest has been discovered. Furthermore, vehicles are subject to MITM cyber-attacks that are attempting to hijack the network to an undesired state.

\subsection{Simulation} \label{sec:simulations}

For the simulation case study, we consider $N = 10$ vehicles treated as double integrator point-masses navigating in an $x$-$y$ plane. A sequence of snapshots are presented in Fig.~\ref{fig:Simulation} showing the network of vehicles resiliently navigating through an obstacle filled environment. From Fig.~\ref{fig:third}-(h), vehicles $\{4,8 \}$ (red circles) are subject to MITM attacks that falsify position information with the intention of hijacking the network. In Fig.~\ref{fig:five} vehicle $1$ discovers an object and then switches to the hidden model \eqref{eq:signature_spring_force} to covertly notify neighboring vehicles about the discovery. In turn, neighboring vehicles detect this hidden signature \eqref{eq:Signature_detect}, estimate the position of the object \eqref{eq:object_estimate}, then also switch to the hidden model to converge to the object. CUSIGN and sign switching alarm rates from the perspective of vehicle $i=7$ are provided in Fig.~\ref{fig:Simulation_AR} while monitoring neighboring vehicles for primary and hidden model behaviors. Alarm rates for CUSIGN \eqref{pro:CUSIGN} monitoring vehicles $\{ 1,2,4,8,10 \}$ travel beyond detection bounds indicated by red dashed lines. However, shown in Fig.~\ref{fig:Simulation_AR}(b), the hidden signature alarm rates \eqref{eq:Signature_detect} for vehicles $\{ 1,2,10 \}$ satisfy the hidden model detection bounds to signify consistent behavior is occurring with respect to the hidden model \eqref{eq:signature_spring_force}, thus deeming these vehicles trustworthy. Alternatively, vehicles $\{4,8 \} \in \mathcal{R}_7$ are treated compromised as a hidden signature was not detected from their motion.

\begin{figure}[h!]
\begin{tabular}{cc}
\hspace{-9pt} \subfigure[\label{fig:AR_first} ]{\includegraphics[width = 0.241\textwidth]{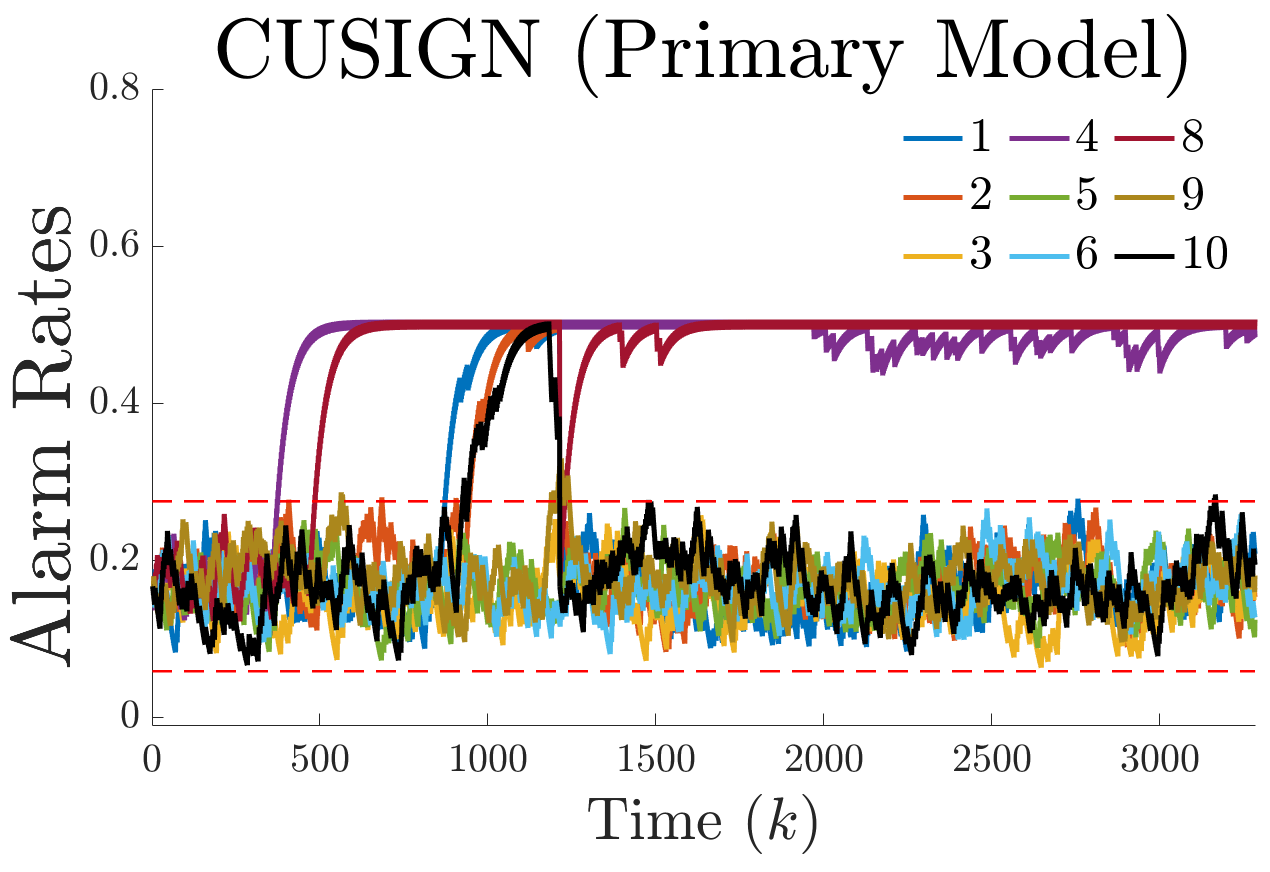}} &
\hspace{-16pt} \subfigure[\label{fig:AR_second} ]{\includegraphics[width = 0.241\textwidth]{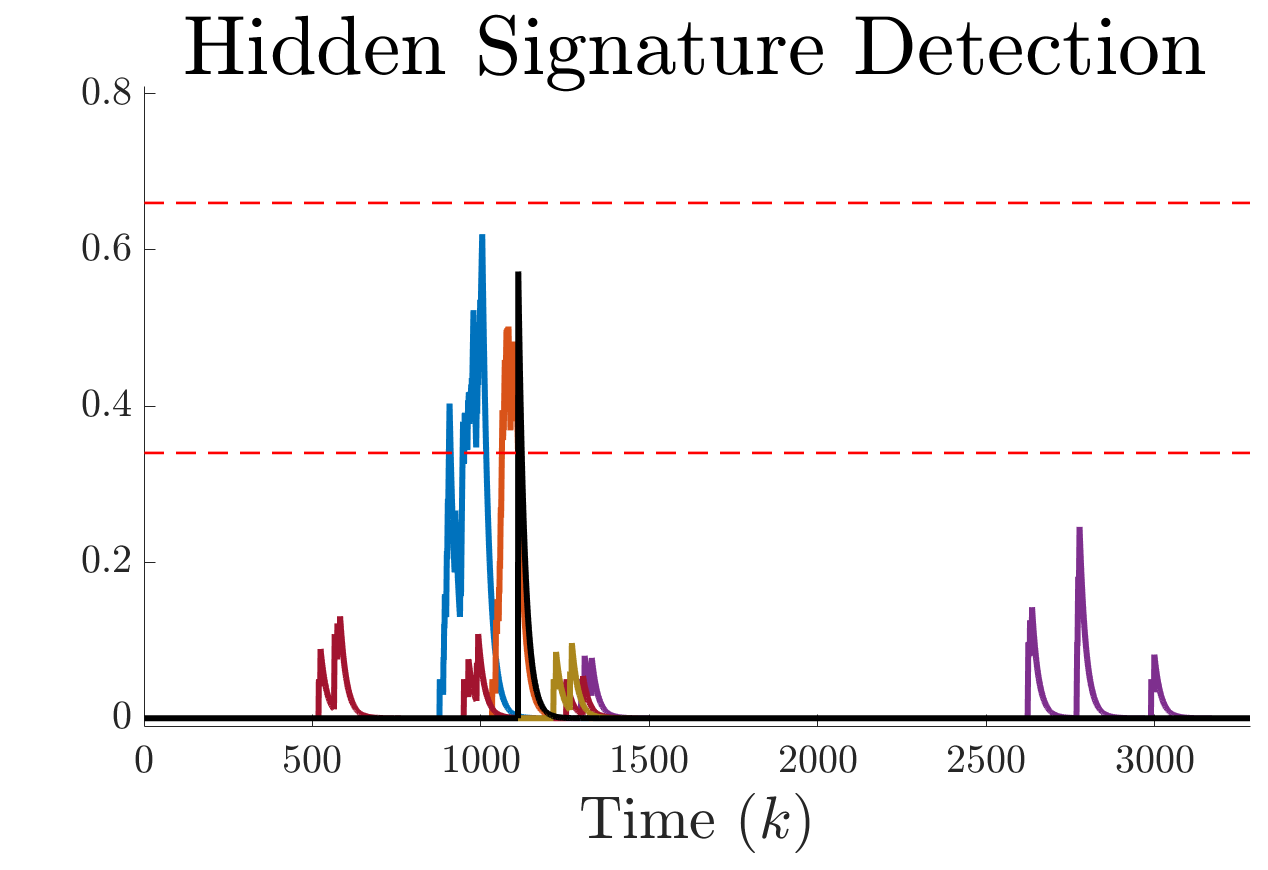}}
\end{tabular}
\vspace{-12pt}
\caption{(a) CUSIGN detection alarm rates from the perspective of vehicle $i = 7$ and (b) sign switching alarm rates for hidden signature detection.}
\label{fig:Simulation_AR}
\vspace{-9pt}
\end{figure}

\subsection{Experiment} \label{sec:experiments}

\begin{figure*}[t!hb]
\begin{tabular}{ccccccc}
\hspace{-11pt} \renewcommand{\thesubfigure}{}
\subfigure[\label{fig:gopro1} ]{\includegraphics[width = 0.13\textwidth]{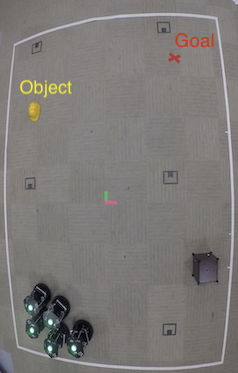}} &	
\hspace{-14pt} \renewcommand{\thesubfigure}{}
\subfigure[\label{fig:gopro2} ]{\includegraphics[width = 0.13\textwidth]{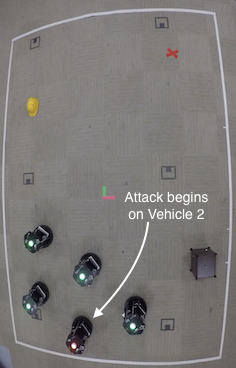}} &
\hspace{-14pt} \renewcommand{\thesubfigure}{}
\subfigure[\label{fig:gopro3} ]{\includegraphics[width = 0.13\textwidth]{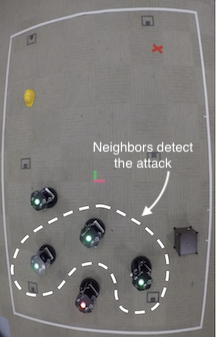}} &
\hspace{-14pt} \renewcommand{\thesubfigure}{}
\subfigure[\label{fig:gopro4} ]{\includegraphics[width = 0.13\textwidth]{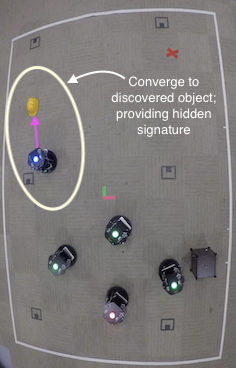}} &
\hspace{-14pt} \renewcommand{\thesubfigure}{}
\subfigure[\label{fig:gopro5} ]{\includegraphics[width = 0.13\textwidth]{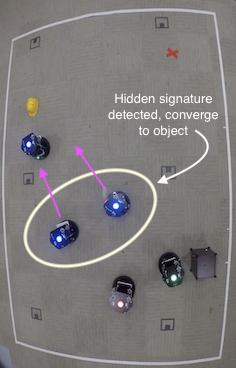}} &
\hspace{-14pt} \renewcommand{\thesubfigure}{}
\subfigure[\label{fig:gopro6} ]{\includegraphics[width = 0.13\textwidth]{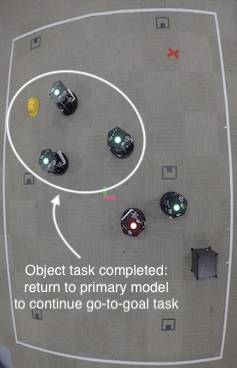}} &
\hspace{-14pt} \renewcommand{\thesubfigure}{}
\subfigure[\label{fig:gopro7} ]{\includegraphics[width = 0.13\textwidth]{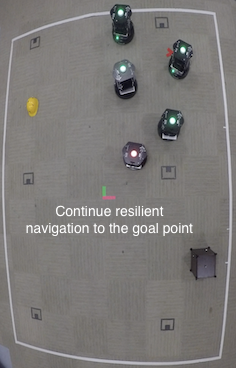}} \\[-4pt]
\hspace{-11pt} \renewcommand{\thesubfigure}{(a)}
\subfigure[\label{fig:matlab1} ]{\setlength{\fboxsep}{0pt}\fbox{\includegraphics[width = 0.1291\textwidth]{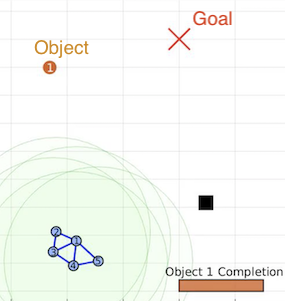}}} &
\hspace{-14pt} \renewcommand{\thesubfigure}{(b)}
\subfigure[\label{fig:matlab2} ]{\setlength{\fboxsep}{0pt}\fbox{\includegraphics[width = 0.13\textwidth]{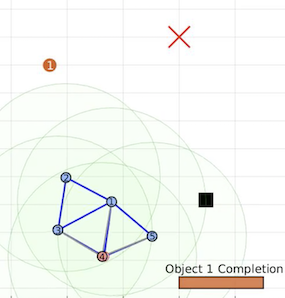}}} &
\hspace{-14pt} \renewcommand{\thesubfigure}{(c)}
\subfigure[\label{fig:matlab3} ]{\setlength{\fboxsep}{0pt}\fbox{\includegraphics[width = 0.13\textwidth]{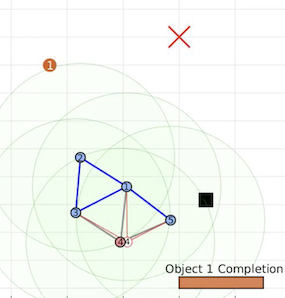}}} &
\hspace{-14pt} \renewcommand{\thesubfigure}{(d)}
\subfigure[\label{fig:matlab4} ]{\setlength{\fboxsep}{0pt}\fbox{\includegraphics[width = 0.13\textwidth]{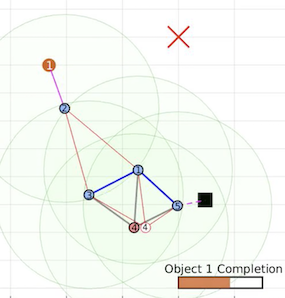}}} &
\hspace{-14pt} \renewcommand{\thesubfigure}{(e)}
\subfigure[\label{fig:matlab5} ]{\setlength{\fboxsep}{0pt}\fbox{\includegraphics[width = 0.13\textwidth]{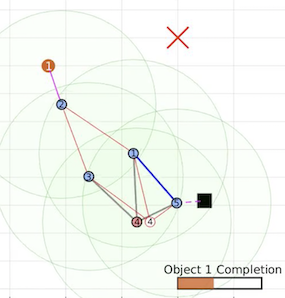}}} &
\hspace{-14pt} \renewcommand{\thesubfigure}{(f)}
\subfigure[\label{fig:matlab6} ]{\setlength{\fboxsep}{0pt}\fbox{\includegraphics[width = 0.13\textwidth]{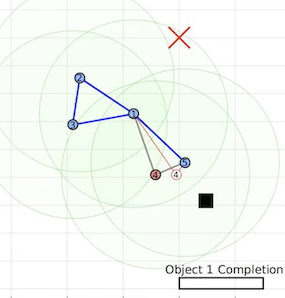}}} &
\hspace{-14pt} \renewcommand{\thesubfigure}{(g)}
\subfigure[\label{fig:matlab7} ]{\setlength{\fboxsep}{0pt}\fbox{\includegraphics[width = 0.13\textwidth]{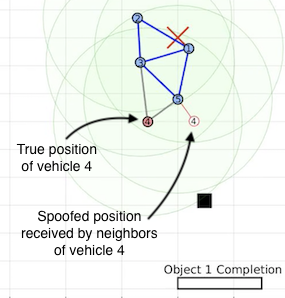}}}
\end{tabular}
\vspace{-4pt}
\caption{An experiment showing a network of $N = 5$ TurtleBot 2 robots resiliently navigating to a goal (red `X'). Vehicle $2$ discovers an object (yellow helmet) as comes within its sensing range (depicted by the green translucent circle), then provides a hidden signature behavior for nearby vehicles to recognize as it converges to the object. Neighboring vehicles also converge to the object of interest upon detection of this hidden signature.} 
\label{fig:Experiment}
\vspace{-11pt}
\end{figure*}

Experimental validations are performed on $N = 5$ TurtleBot 2 differential-drive robots performing a go-to-goal operation within a lab environment. Snapshots of this experiment are presented in Fig.~\ref{fig:Experiment} capturing the following sequence of events; the initial vehicle positions (Fig.~\ref{fig:matlab1}), vehicle $2$ discovering the object (Fig.~\ref{fig:matlab4}), neighboring vehicles converge toward the object after detecting the hidden signature from vehicle $2$ (Fig. \ref{fig:matlab5}), and the network continuing to the goal once the object ``task" has been completed (Fig. \ref{fig:matlab6}-(g)). During the simulation, communication broadcasts from vehicle $j = 4$ are corrupted with false position data that attempt to drive the system to an undesirable location, but the CUSIGN detector finds these stealthy attacks, allowing the network to resiliently perform the operation.  In Fig.~\ref{fig:Experiment_results}, alarm rates that are monitoring the primary \eqref{eq:spring_force} and hidden \eqref{eq:signature_spring_force} models throughout the experiment show vehicle $1$ detecting the compromised vehicle $4$, as well as detecting the hidden signature from vehicle $2$.

\begin{figure}[htb!]
\begin{tabular}{cc}
\hspace{-9pt} \subfigure[\label{fig:exp_alarm_rates} ]{\includegraphics[width = 0.24\textwidth]{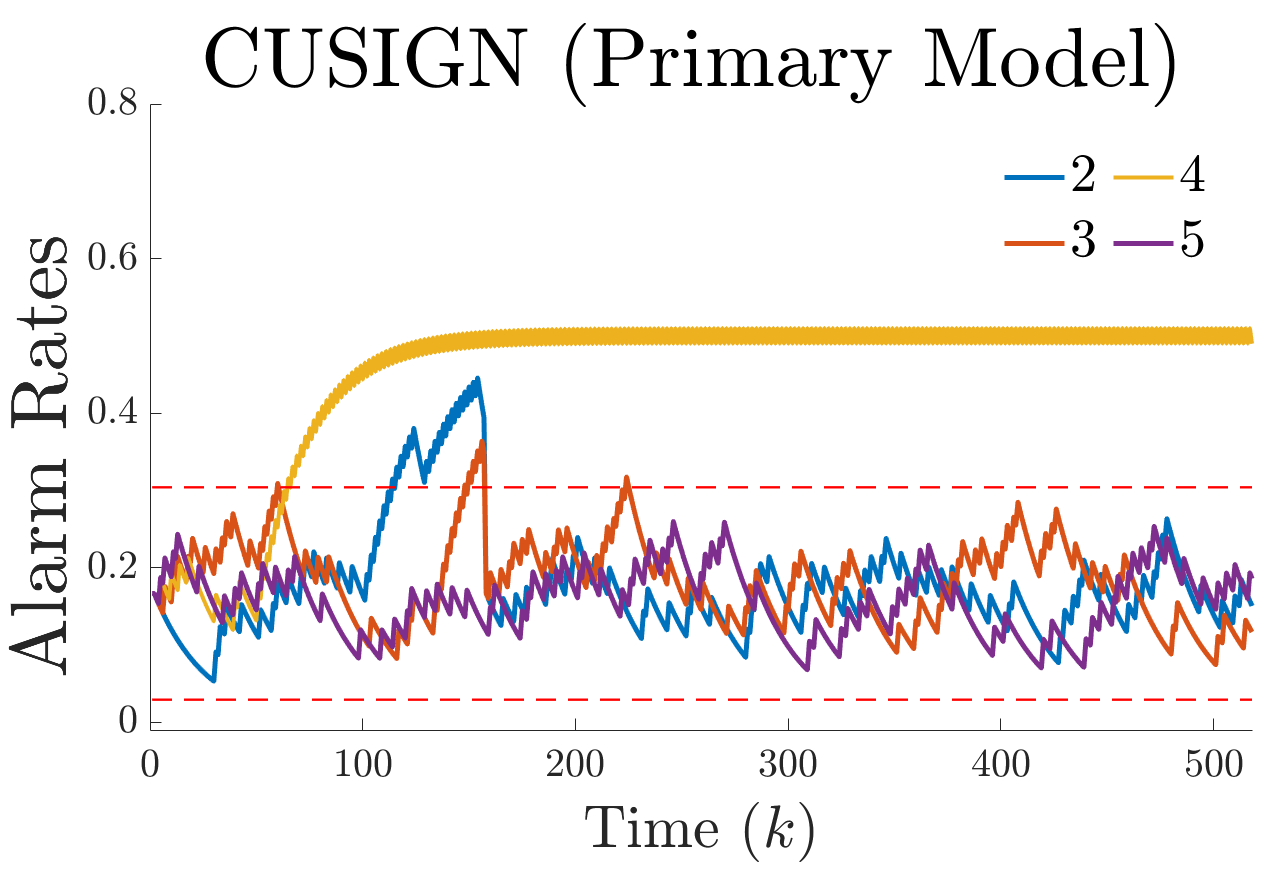}} &
\hspace{-15pt} \subfigure[\label{fig:exp_detect_signature} ]{\includegraphics[width = 0.24\textwidth]{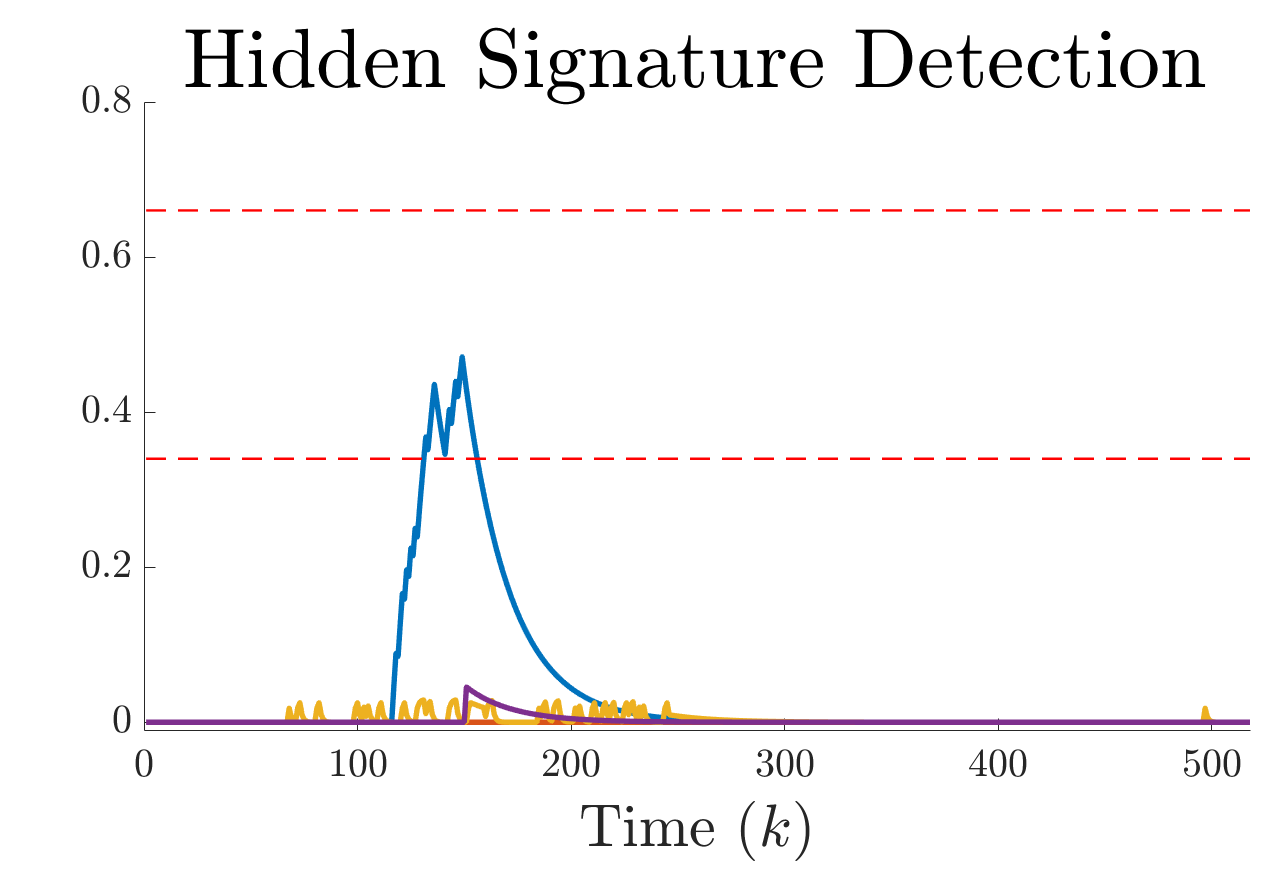}}
\end{tabular}
\vspace{-10pt}
\caption{(a) CUSIGN and (b) sign switching alarm rates from the perspective of vehicle $i = 1$. Detection bounds are indicated by red dashed lines.}
\label{fig:Experiment_results}
\vspace{-12pt}
\end{figure}
\section{Conclusions} \label{sec:conclusion}

In this paper we have proposed a decentralized framework for a network of homogeneous vehicles to resiliently perform desired operations. Vehicles are able to distinguish between received inconsistent information from neighboring vehicles due to man-in-the-middle attacks and hidden model behaviors that provide a detectable signature to implicitly pass safety-critical information. To detect stealthy attacks and the hidden signature, we leverage randomness-based detection techniques ---Cumulative Sign (CUSIGN) and sign switching rate--- to identify whether vehicles are following a primary or hidden network model. In our future work we plan to: i) extend the current approach by investigating the effects of different attack classes/models and ii) develop an adaptive approach for the virtual spring parameters to conform to changing network or environmental conditions.
\section{Acknowledgement}

This work is based on research supported by NSF under grant number \#1816591 and ONR under agreement number N000141712012. The authors would like to thank Rahul Peddi and Shijie Gao for assisting with the experiments.

\bibliographystyle{IEEEtran}
\bibliography{References}

\end{document}